\newtheorem{theorem}{Theorem}[section]
\newtheorem{lemma}{Lemma}
\begin{document}

\title{Delay Optimal Scheduling for Chunked Random Linear Network Coding Broadcast} 

\author{Emmanouil Skevakis, Ioannis Lambadaris, Hassan Halabian\\
\IEEEauthorblockA{Department of Systems and Computer Engineering, Carleton University, Ottawa, Ontario, Canada\\
\textit{email: \{eskevakis, ioannis, hassanh\}@sce.carleton.ca}}}

\maketitle

\begin{abstract}
We study the broadcast transmission of a single file to an arbitrary number of receivers using Random Linear Network Coding (RLNC) in a network with unreliable channels. Due to the increased computational complexity of the decoding process (especially for large files) we apply \textit{chunked} RLNC (i.e. RLNC is applied within non-overlapping subsets of the file). 

In our work we show the \textit{optimality} of the \textit{Least Received (LR)} batch scheduling policy (which was introduced in our prior work) with regards to the expected file transfer completion time. Furthermore, we refine some of our earlier results, namely the expected file transfer completion time of the LR policy and the minimum achievable coding window size in the case of a user defined delay constraint. Finally, we experimentally evaluate a modification of the LR policy in a more realistic system setting with reduced feedback from the receivers.

\end{abstract}

\section{Introduction}

Due to the constantly increasing demand in multimedia traffic, improving the performance of \textit{broadcast} and \textit{multicast} communications is of major importance. Network coding (NC) is widely studied in such transmission systems. With NC, the original data packets are encoded (based on an encoding scheme) and transmitted. Recent studies revealed that network coding can enhance the performance of wireless networks in broadcast and multicast transmissions (\cite{eryilmaz2008delay, yang2012throughput, xie2013network, katti2006xors, ghaderi2008reliability, ho2006random}). It has been shown that network coding can provide significant gains in terms of transmission delay (\cite{eryilmaz2008delay, xie2013network}), achievable throughput (\cite{yang2012throughput}) and overall reliability (\cite{ghaderi2008reliability}) of the underlying network (e.g. network coding can increase the multicast and broadcast capacity of unreliable links (\cite{katti2006xors,ghaderi2008reliability})). 

Random linear network coding (RLNC) is a simple, yet efficient, encoding scheme (\cite{ho2006random}). Under RLNC, the encoded packets are created by linearly combining a predefined number of $K$ packets (known as the \textit{coding window size}) prior to the transmission. Upon successful reception of $K$ linearly independent encoded packets, a receiver will be able to decode them (e.g. using Gaussian elimination). In broadcast and multicast systems, RLNC has been shown not only to enhance the average throughput and delay (or completion time) \cite{katti2006xors, ghaderi2008reliability, ho2006random, koller2011optimal, eryilmaz2008delay} but also to approach the system capacity with negligible overhead \cite{ho2004networking}. However, RLNC may result in increased decoding delay (the entire coding window of $K$ packets needs to be received/stored before the beginning of the decoding process) and storage/computational requirements (the complexity of the decoding process is $O(K^3)$). Chunked network coding has been proposed (\cite{silva2009sparse, joshi2013round}) in order to reduce the computational complexity of RLNC. In chunked network coding the message is divided into chunks (or generations, blocks) of packets and the encoding scheme (usually RLNC) is applied to each chunk.

In \cite{skevakis2016decoding}, we studied the one hop broadcast transmission of a single file to an arbitrary number of receivers in an unreliable (wireless) network, using chunked RLNC. In our system, the receivers reject encoded packets of future chunks until all such previous chunks are succesfully decoded. The benefits of such an approach are the following. First, receivers are relieved of increased storage requirements; only $K$ packets need to be stored. Second, ordering the delivery of packet chunks is prefered in mulitmedia streaming applications such as YouTube and Netflix. In \cite{skevakis2016decoding} we developed and evaluated a scheduling policy, namely the \textit{Least Received (LR)}, when $K$ is less than the file size. We showed that near optimal completion time\footnote{The minimum (optimal) file transfer completion time will be achieved when the coding window size equals the file size (\cite{joshi2013round}).} can be achieved with small values of $K$ (smaller $K$ results in timely delivery of earlier packets), under the $LR$ policy. We dervived closed form approximation formulas for a) the expected file transfer completion time and b) the minimum achievable coding window size $K$ given a user defined delay constraint. This constraint is expressed in the form of the relative increase with regards to the optimal completion time (\cite{joshi2013round}). In \cite{skevakis2016optimal}, we proved the optimality, in terms of minimizing the file transfer completion time, of our proposed $LR$ policy, using Dynamic Programming for the \textit{special case} of two receivers. 

\subsection{Contributions and Related Work}

The \textit{contributions} of this paper are:
\begin{enumerate}
\item Proof of the optimality of the $LR$ policy with respect to the file transfer time for an arbitrary number of receivers.
\item Derivation of further approximations for the expected file transfer completion time and the minimum achievable coding window size $K$ given a user defined delay constraint (\cite{skevakis2016decoding}) and evaluation of their accuracy. 
\item Simulation comparison and performance assessment of our proposed LR policy with other policy heuristics.
\item Proposed extension of the $LR$ policy for more realistic systems with limited feedback.
\end{enumerate}
\noindent Similar studies, but not with the same objectives, have already been performed. Such studies can be divided into two groups, based on the adopted research direction. In the first group, general performance properties of RLNC based on the transmission of a single chunk of $K$ packets (and not of the entire file) are investigated. Eryilmaz et al. \cite{eryilmaz2008delay} quantified the throughput and delay gains of network coding when compared to traditional transmission strategies. In \cite{xie2013network}, tight bounds for the expected delay per packet under uncoded transmissions are derived and compared with the expected delay per packet under RLNC. In \cite{swapna2013throughput} it is shown that if the coding window size scales with the number of receivers, the throughput will converge to the broadcast capacity. Yang et al. \cite{yang2012throughput} focused on a similar problem (system throughput as a function of the number of receivers) in time correlated erasure channels. 

In the second, the researchers focused on the benefits of chucked RLNC. The authors of \cite{koller2011optimal} investigated the optimal block size in order to minimize the expected number of transmissions. In \cite{koller2011optimal}, RLNC is applied over the blocks (an encoded block is a linear combination of all the blocks) and not within each block, as in our work.\footnote{The benefits of our approcah were discussed earlier.} In \cite{silva2009sparse} and \cite{joshi2013round} the authors focused on reducing the computational complexity of the decoding processes using chunked network coding with overlapping classes. Their work mainly applies to unicast and not broadcast sessions.

The remainder of this paper is organized as follows: Our system model along with the neccessary notation is introduced in section \ref{SystemModel}. In section \ref{proof} we provide the proof for the optimality of the $LR$ policy. Section \ref{approx} contains approximations (extending the ones we presented in \cite{skevakis2016decoding}) for the expected file transfer completion time and the minimum achievable coding window size $K$ given a user defined delay constraint, under chunked RLNC. Our experimental results are presented in section \ref{exps} along with a brief extension of our proposed policy in the case of limited feedback. We conclude and present further research suggestions in section \ref{conclusions}.

\section{Problem Statement - Main Results}\label{SystemModel}
\subsection{System Model}
We consider the wireless, one-hop transmission of a single file to an arbitrary number of receivers ($N$) using network coding. Despite its simplicity, our system model can capture the characteristics of current cellular and satellite systems and it may be used in the analysis of more complex network topologies. Our model is described below.

\textit{Base Station:} The base station holds a single file that contains $F$ packets. The file is (virtually) divided into consecutive and non-overlapping subsets (batches) of $K$ packets. Therefore, batch $i$ contains the packets $(i-1)K$ to $iK-1$. 
We let $b = \frac{F}{K}$\footnote{For the purpose of this study, we assume $\frac{F}{K}$ to be an integer.} denote the total number of batches. The base station transmits encoded packets, where each encoded packet is generated by applying RLNC within a batch. At each time slot, the base station selects a batch based on a scheduling policy and the channel state information (CSI); we assume that the base station has complete knowledge of the connected receivers. The assumption of perfect CSI can be impractical, especially when the number of receivers is large. However, by studying such a scenario, we can derive the optimal actions (that lead to the minimum file transfer completion time) in the ideal case. This can give us useful insights for the optimal actions in the case of limited CSI and provide strict lower bounds on the completion time in such cases. 
%

\textit{Channel:} The channels connecting the receivers and the base station are modelled by i.i.d ON/OFF processes. The state of each channel is represented by a Bernoulli r.v. with mean $p$. Time is slotted and only one packet can be transmitted at each time slot. If the corresponding channel is ON the packet is received with no errors.

\textit{Receivers:} The state $X_i(t)$ of each receiver $r_i$ is a random variable representing the total number of received packets at time $t$. Each receiver has a buffer where the received encoded packets are stored. Upon successful reception of $K$ encoded packets (of the same batch), the packets are decoded and removed from the queue. We assume linear independence of the encoded packets and negligible coding overhead (coefficients for the linear combinations), as presented in \cite{eryilmaz2008delay}. Each receiver has an attribute, namely the batch ID. The batch ID of receiver $i$ at $t$ is $\lfloor \frac{X_i(t)}{K} \rfloor +1$ and represents the batch of the encoded packet(s) that $r_i$ is expecting. Any out of order packets (encoded packets of batch $i$ received by a receiver with batch ID $j \neq i$) are discarded by the receiver.

\textit{Batch Scheduling Policy:} When applying chunked RLNC, a scheduling policy must be defined in order to select a single batch to be encoded (and thus transmitted) at each time slot. Figure \ref{System} shows an example of a system at some time $t$. Receivers R1 and R2 have successfully all of the first $K$ packets and are thus expecting encoded packets of the second batch. Recever R3 has received 2 packets and is expecting an encoded packet of the first batch. At some time slots, hereafter referred to as \textit{conflict slots}, a policy has more than one candidate batches (e.g. figure \ref{System}, assuming that all of the receivers are connected). Any policy which selects a "useful" batch to encode (a batch that will successfully be received by at least one receiver) at each time slot where at least one of the receivers is connected (i.e. no idling whenever possible), is referred to as a \textit{feasible policy}.

\subsection{Problem Statement}
The goal of our study is to find and evaluate the optimal policy in order to minimize the file transfer completion time (i.e. until \textit{all} receivers receive the entire file). 

Our proposed policy is the \textit{Least Receiver (LR)} batch scheduling policy (\cite{skevakis2016decoding}). $LR$ selects, at each time slot, the useful batch with the minimum ID (batch 1 in figure \ref{System}). 
The selection of the LR policy as a candidate for minimizing the expected file transfer time is based on the following intuition.

\textit{1) Finite file size:} We transmit a single file of finite length and we aim to minimize its total transfer time to the receivers. The file transfer completion time will be determined by the receivers which will be the last to receive the file (any receiver which "finished" earlier will not contribute to the total transfer time). Thus, intuitively, by favouring the receivers with the least number of received packets, we expect that the file transfer completion time will be decreased.

\textit{2) Queue balancing:} The LR policy, by favouring the receivers with the smallest batch ID, decreases the differences (i.e. spreading) of the batch IDs among the receivers. Therefore, the probability of having a large number of receivers with the same batch ID is increased. Thus, in the long run, each transmitted packet should be beneficial to more receivers.

\begin{figure}
\centering
\includegraphics[scale = 0.6, trim = 2 2 2 2, clip]{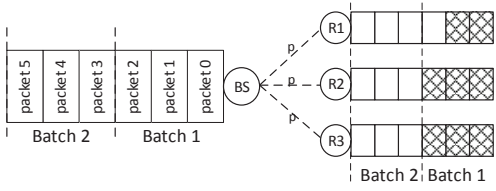}
\caption{System at time $t$. $K=3$, $N=3$.\\ ($R2$ and $R3$ have successfully received batch 1)}
\label{System}
\end{figure}

\subsection{Main Result}
In section \ref{proof} we will prove the following theorem, which formally shows the optimality of the $LR$ policy.
\begin{theorem}
Let $T^{\pi}$ denote the file transfer completion time under a batch scheduling policy $\pi$. Given any feasible policy $\pi^{(0)} \in \Pi$, where $\Pi$ is the set of all feasible policies, we can construct a sequence of policies $\pi^{(n)}$ such that :
\begin{center}
$T^{\pi^{(n)}} \leq T^{\pi^{(n-1)}}, n = 1, ..., 2b+1$ \\(where $b = \frac{F}{K}$, the number of batches in the original file)
\end{center}
The final policy ($\pi^{(2b+1)}$) will be the $LR$ policy.
\label{thm}
\end{theorem}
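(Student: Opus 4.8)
The plan is to prove the theorem by a sample-path interchange (exchange) argument. Because the channel processes are exogenous to the scheduling decisions, the first step is to fix a single realisation $\omega$ of all the ON/OFF sequences $\{C_i(t)\}$ and to construct, for that $\omega$, policies satisfying $T^{\pi^{(n)}}(\omega)\le T^{\pi^{(n-1)}}(\omega)$; taking expectations over $\omega$ at the end yields the stated inequality. A randomised $\pi^{(0)}$ is reduced to the deterministic case by conditioning on its internal randomness, and $T$ is a.s.\ finite, so there is nothing to worry about there. I would carry the analysis with the state vector $X(t)=(X_1(t),\dots,X_N(t))$ of received-packet counts: since out-of-order packets are discarded and each $X_i(\cdot)$ is non-decreasing, this vector determines every batch ID $\lfloor X_i(t)/K\rfloor+1$, hence the set of useful batches at $t$, hence the LR action.

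Next comes the induction step. Given $\pi^{(n-1)}$, if it already coincides with LR along $\omega$ set $\pi^{(n)}=\pi^{(n-1)}$; otherwise let $\tau$ be the first slot (necessarily a conflict slot) at which $\pi^{(n-1)}$ transmits a batch $j$ while the LR rule applied to $X^{\pi^{(n-1)}}(\tau)$ would transmit a strictly smaller useful batch $i<j$. I would define $\pi^{(n)}$ to (i) copy $\pi^{(n-1)}$ for $t<\tau$, (ii) transmit batch $i$ at $\tau$, and (iii) for $t>\tau$ shadow $\pi^{(n-1)}$'s batch choices while carrying a single ``balancing'' discrepancy --- one unit of progress that $\pi^{(n)}$ has given to a lagging receiver and withheld from a leading one relative to $\pi^{(n-1)}$ --- and extinguish that discrepancy at the first slot $\sigma>\tau$ where doing so is both feasible and non-harmful (roughly, the first slot at which $\pi^{(n-1)}$ transmits batch $i$ and $\pi^{(n)}$ can instead transmit $j$). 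After $\sigma$ the two state trajectories agree up to a relabelling of the receivers, so both completion times are decided by what happens on $[\tau,\sigma]$.

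The main obstacle is the monotonicity lemma supporting step (iii): one must show by forward induction on $t$ that, along $\omega$, the sorted state vector of $\pi^{(n)}$ is never worse than that of $\pi^{(n-1)}$ in the appropriate sense, so that the laggard under $\pi^{(n)}$ reaches $F$ no later. The delicate point is that immediately after the swap at $\tau$ the two (unsorted) vectors are not componentwise comparable --- $\pi^{(n)}$ advanced a low receiver while $\pi^{(n-1)}$ advanced a higher one --- so the invariant has to be a balancing order of majorization type (the $\pi^{(n)}$-vector is a contraction of the $\pi^{(n-1)}$-vector towards balance, modulo the fact that distinct feasible policies may deliver different total packet counts), and one has to verify that every pair of coupled one-slot transitions preserves this order and that it collapses to equality (up to relabelling) at $\sigma$. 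Two side conditions ride along in the same induction: that $\pi^{(n)}$ stays feasible (the batch it is told to send remains useful under its slightly shifted state, so no idling is introduced) and that it is causal.

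Finally, for termination I would argue that each round strictly enlarges the initial segment of $\omega$ on which the policy already agrees with LR; since the LR trajectory on any $\omega$ decomposes into only finitely many --- and, with the right definition of a ``phase'', at most $2b+1$ --- coarse regimes (the $b$ batches served in succession together with the associated boundary and terminal regimes, a count that does not depend on $N$), the construction must reach the LR policy itself within $2b+1$ rounds. I expect essentially all of the genuine work to sit in the monotonicity lemma (preserving the balancing order across a coupled transition, and handling the re-synchronisation at $\sigma$) and in the bookkeeping that produces the clean bound $2b+1$; the remainder is setup and routine verification.
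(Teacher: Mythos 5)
Your interchange plan has two genuine gaps, and the first one is fatal to the framing rather than a missing detail. You propose to fix a single channel realisation $\omega$ and prove $T^{\pi^{(n)}}(\omega)\le T^{\pi^{(n-1)}}(\omega)$ pathwise, then take expectations. Pathwise dominance on a common $\omega$ is false in this problem. Take two receivers with receiver $1$ lagging in batch $i$ and receiver $2$ ahead in batch $j>i$, both ON at the conflict slot $\tau$; let LR serve batch $i$ and the competitor serve batch $j$, and let $\omega$ be such that after $\tau$ receiver $1$ is ON in every slot while receiver $2$ is OFF for a long stretch and then ON. Receiver $2$ is then the last to finish under both policies, and it finishes one ON-slot \emph{earlier} under the competitor, which handed it a packet at $\tau$; so $T^{LR}(\omega)>T^{\pi}(\omega)$ on that path. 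This is exactly why the paper's argument is a coupling argument and not a pathwise one: in Appendix A it compares $\omega$ against an \emph{equidistributed but different} path $\omega'$ obtained by swapping the connectivities of a lagging and a leading receiver at the critical slots, and concludes only a stochastic ordering. Your later remark that after $\sigma$ the two trajectories ``agree up to a relabelling of the receivers'' runs into the same wall: on a fixed $\omega$ the receivers are distinguishable by their future connectivity sequences, so a permuted state does not yield the same completion time unless you also permute the channel realisations, i.e.\ move to a different sample path. Your plan needs to be restated from the outset as a coupling/stochastic-dominance argument, with the connectivity-swap built into the monotonicity lemma.

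The second gap is the bound of $2b+1$ rounds. Your induction step repairs the policy at the \emph{first} slot where it deviates from LR, so each round extends the prefix of agreement by one conflict slot; since the number of conflict slots before completion is random and unbounded, this gives a sequence of improving policies of unbounded length, not one of length $2b+1$. Your ``phase'' remark gestures at the fix but is not the construction you defined. The paper obtains the count by a different decomposition: it defines $2b+1$ milestone events that every feasible policy visits (for each batch $\beta$, the first time the bottleneck receivers are one packet short of completing $\beta$, and the first such time at which they are all additionally ON, plus the initial milestone), and runs a \emph{backwards} induction in which $\pi^{(n)}$ agrees with $\pi^{(n-1)}$ up to the $n$-th-from-last milestone and with LR thereafter. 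Each step is then closed by a lemma saying that reaching a milestone no later, and with a subset of the bottleneck receivers, implies a no-larger completion time. If you want to keep a forward interchange flavour, you would still need to batch all the interchanges occurring between two consecutive milestones into a single round to recover the stated $2b+1$.
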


\section{Proof of theorem \ref{thm}}\label{proof}

\subsection{Notation - Definitions}

The notation used in section \ref{proof} will now be introduced.
\begin{itemize}
\item $X_i^{\pi}(t)$ denotes the r.v. representing the total received packets of receiver $i$ at time $t$ based on policy $\pi$.
\item $T_i^{\pi}$ denotes the completion time of receiver $i$ based on policy $\pi$ ($T_i^{\pi} = \min \{t : X_i^{\pi}(t) = F \}$).
\item $T^{\pi}$ denotes the file transfer completion time based on policy $\pi$ ($T^{\pi} = \max \{T_i^{\pi}\}$).
\item $R_{\beta}(t)$ denotes the set of receivers that are expecting encoded packets of batch $\beta$ at time $t$, $\beta = 1,.., b, F$ ($R_F$ contains the receivers that have received the entire file).
\item $R_{\beta}^*(t)$ denotes the set of receivers that require only one packet to decode batch $\beta$ at time $t$, $\beta = 1,.., b$.
\item The \textit{bottleneck receiver(s)} are the receiver(s) that have correctly received the smallest number of packets.
\item $D^{\pi}(t) = \beta$ : The decision of policy $\pi$ at $t$ is to transmit an encoded packet generated from bath $\beta$. 

\end{itemize}

For the proof, we will employ the tools of sample path analysis and stochastic dominance. Sample path methods have been widely used in the performance analysis and control of queueing systems. Loosely speaking, sample paths from a queueing system under two different operational regimes (e.g., control policies) are grouped and compared pointwise in an attempt to prove that one sample path \textit{dominates} the other, thus leading to performance comparisons. The technique is also known as sample path coupling and proves \textit{stochastic ordering} for the random variables of interest. The reader is referred to \cite{liu1995sample, tassiulas1993dynamic} for further details. In our work, we proceed with backwards induction. We define certain states of the system ($S_{b-1}, S_{b-2}, ...$) that will be visited by any scheduling policy. In particular, state $S_{b-1}$ (figure \ref{state01}) is defined as the one where all bottleneck receivers are expecting one packet for successfully decoding batch $b-1$, $S_{b-2}$ (figure \ref{state2}) is the state where the bottleneck receivers are short of one packet to complete batch $b-2$ and states $S_{b-3}, S_{b-4}, ..., S_{1}$ are defined similarly. For each state we will examine two \textit{milestone} time slots, depending on the connectivities of the bottleneck receivers. As an example, for state $S_{b-1}$, we will examine the system at times $t_{b-1}$ and $t_{b-1}^{ON}$, defined as follows; $t_{b-1}$, is the first time slot where the system reaches state $S_{b-1}$. $t_{b-1}^{ON}$ is the first time slot where the system arrives at state $S_{b-1}$ \textit{and} all bottleneck receivers are connected. The milestone time slots for $S_{b-2}. S_{b-3}, ...$ are defined in a similar fashion.

\subsection{Roadmap of the proof}

\begin{figure}
\centering
\includegraphics[trim = {2 2 2 2}, clip, scale = 0.45]{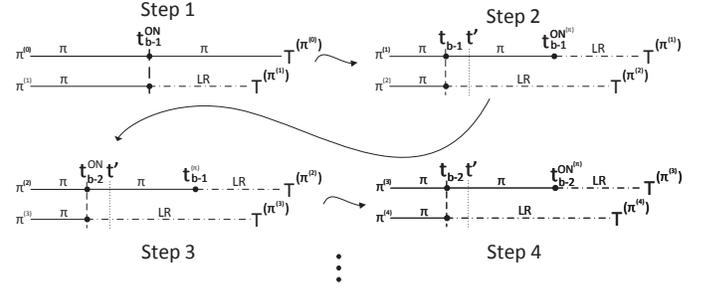}
\caption{Sketch of the proof of theorem \ref{thm}}
\label{steps}
\end{figure}

Starting from the latest milestone, $t_{b-1}^{ON}$, we show that the completion time of any policy $\pi$ can be improved by switching to the $LR$ policy at that time slot (figure \ref{steps}, Step 1). Then, based on the result of the previous step, we show that the completion time of any policy can be decreased if we follow the $LR$ policy from $t_{b-1}$ and onwards (figure \ref{steps}, Step 2). The rest of the milestone time slots are examined similarly (figure \ref{steps} - Step 3,4), by applying induction until the initial time $t=0$.

For the rest of this section, when comparing the queues of the receivers between two policies, we assume "coupled" connectivity vectors. The reader is suggested to refer to \cite{liu1995sample} for details on "sampe path coupling arguments".

\subsection{Theorem \ref{thm} - Details of the proof}

Suppose that our model operates under an arbitrary batch scheduling policy $\pi$. Such a system will visit (almost surely) state $S_{b-1}$ as depicted in figure \ref{state01}. As we mentioned previously, in $S_{b-1}$ the bottleneck receivers will be missing ONE packet to complete the reception of batch $b-1$. $R_{b-1}^*(t)$ is the set of these bottleneck receivers at time $t$. The rest of the receivers are either in batch $b$ (set $R_b(t)$) or have received the entire file (set $R_F(t)$). We assume that the set $R_b(t)$ is non-empty (if it is, the analysis is trivial). As we mentioned in section \ref{SystemModel}, we will examine two time slots while the system is in $S_{b-1}$, namely $t_{b-1}$ and $t_{b-1}^{ON}$. 

\begin{figure}
\centering
\includegraphics[scale = 0.45]{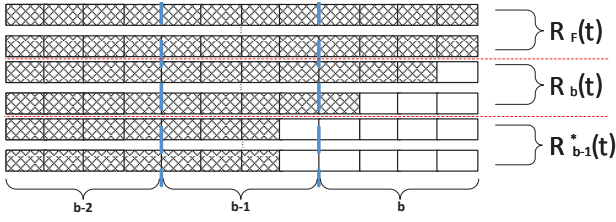}
\caption{State $S_{b-1}$, $K=4$}
\label{state01}
\end{figure}

\subsubsection{\textbf{\underline{Milestone time slot $t_{b-1}^{ON}$}}}\label{t_b-1ON}

Let $\pi^{(0)} = \pi$ and $\pi^{(1)}$ be a policy that makes the same decisions as $\pi$ until $t_{b-1}^{ON}$ and at $t_{b-1}^{ON}$ it switches to the $LR$ policy (figure \ref{steps}, Step 1). In this section, we will show that $T^{(\pi^{(0)})} > T^{(\pi^{(1)})}$. Hereinafter, we will refer to policy $\pi^{(0)}$ and $\pi^{(1)}$ as $\pi$ and $LR$, respectively. 
We assume that the time slot $t_{b-1}^{ON}$ is a conflict slot\footnote{If it is not, then all of the bottleneck receivers $r_i$ will receive a packet (regardless of the policy) and will move to batch $b$. As a result the completion time will be the same for every policy. This also happens if the decision of policy $\pi$, $D^{\pi}(t_{b-1}^{ON}) = D^{LR}(t_{b-1}^{ON}) = b-1$.}. At that time slot, $D^{\pi}(t_{b-1}^{ON}) = b$ and $D^{LR}(t_{b-1}^{ON}) = b-1$. Thus, 
\begin{center}
$X_i^{LR}(t_{b-1}^{ON} + 1) > X_i^{\pi}(t_{b-1}^{ON} + 1) \hspace{2 em} \forall i : r_i \in R_{b-1}^*(t_{b-1}^{ON})$
\end{center}

At $t_{b-1}^{ON}+1$, some receivers of the set $R_b(t_{b-1}^{ON})$ might move to $R_F(t_{b-1}^{ON}+1)$ with policy $\pi$ and \textit{all} of the receivers of set $R_{b-1}^*(t_{b-1}^{ON})$ will move to set $R_b(t_{b-1}^{ON}+1)$ with policy $LR$.
%
Since, under $LR$, all of the receivers are either expecting packets from batch $b$ or are finished, every time that a receiver $r_i$ is ON\footnote{By "ON" (or "OFF") we mean that a receiver is connected (or disconnected) to the base station.}, $r_i$ will be receiving a packet. Policy $\pi$ will probably have more conflict slots. Therefore, \textit{each time that a receiver $r_i$ in ON (for $t \geq t_{b-1}^{ON} + 1$), $r_i$ might receive a packet with $\pi$ but it will surely receive a packet with $LR$}. Thus, for $ t \geq t_{b-1}^{ON}$
\begin{equation}
X_i^{LR}(t+1) > X_i^{\pi}(t+1), \forall i : r_i \in R_{b-1}^*(t_{b-1}^{ON}) \cap R_b^{(LR)}(t)\footnote{{This equations holds for all $r_i$'s that were in $R_{b-1}^*(t_{b-1}^{ON})$ and are now in $R_b^{(LR)}(t)$; i.e. we are excluding the receivers that are finished at $t$.}} 
\label{1}
\end{equation}

Since each receiver $r_i$ will have received more packets with policy $LR$ and the completion time of receiver $r_i$ is $T_i^{\pi} = \min \{t : X_i^{\pi}(t) = F \}$ it follows that : 
\begin{equation}
T_i^{LR} < T_i^{\pi} \hspace{1 em} \forall i : r_i \in R_{b-1}^*(t_{b-1}^{ON})
\label{2}
\end{equation}

\begin{figure}
\centering
\includegraphics[scale = 0.65]{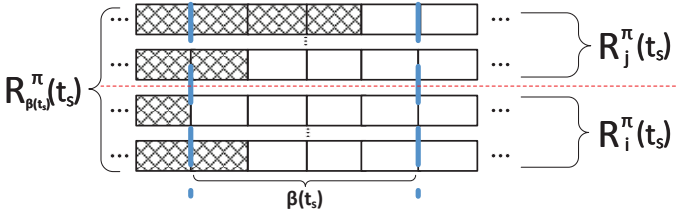}
\caption{Bottleneck receivers at $t_s$ (Lemma \ref{lemmaA})}
\label{lemA}
\end{figure}

\begin{lemma} Let, at $t_s$, $\beta(t_s)$ be the batch ID of the bottleneck receivers. Let $R_{\beta(t_s)}^{\pi}(t_s)$ be the set of such receivers, under any policy $\pi$, and $R_{i}^{\pi}(t_s)$ and $R_{j}^{\pi}(t_s)$ be two disjoint subsets of $R_{\beta(t_s)}^{\pi}(t_s)$ (figure \ref{lemA}). If, at $t_s$,
\begin{center}
$X_i^{\pi}(t_s) \leq X_j^{\pi}(t_s) \hspace{2 em} \forall i,j : r_i \in R_{i}^{\pi}(t_s), r_j \in R_{j}^{\pi}(t_s)$
\end{center}
then it holds that : 
\begin{center}
$\min\limits_{i} X_{i}^{\pi}(t+1) \leq \min\limits_{j} X_{j}^{\pi}(t+1) \hspace{2 em} t \geq t_s,$\\
for all the receiver indices $i$ and $j$ as above that remain in the set $R_{\beta(t_s)}^{\pi}$ at time $t > t_s$.\\ (i.e.,$i,j : r_{i} \in R_{i}^{\pi}(t_s) \cap R_{\beta(t_s)}^{\pi}(t)$
$, r_{j} \in R_{j}^{\pi}(t_s) \cap R_{\beta(t_s)}^{\pi}(t)$)
\end{center}
\label{lemmaA}
\end{lemma}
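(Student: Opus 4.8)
The plan is to deduce the lemma from a stronger, purely pointwise statement: under the coupled connectivities in force throughout this section, $X_i^{\pi}(t) \le X_j^{\pi}(t)$ holds for \emph{every} $t \ge t_s$ and every pair $r_i \in R_i^{\pi}(t_s)$, $r_j \in R_j^{\pi}(t_s)$ --- i.e. the set-wise domination assumed at $t_s$ propagates pairwise to all later slots. Granting this, the lemma follows at once. Fix $t > t_s$ and any $r_j \in R_j^{\pi}(t_s) \cap R_{\beta(t_s)}^{\pi}(t)$, and pick any $r_i \in R_i^{\pi}(t_s)$ (if $R_i^{\pi}(t_s)$ is empty there is nothing to prove). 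Since $X_i^{\pi}(t) \le X_j^{\pi}(t) < \beta(t_s)K$ --- the last inequality because $r_j$ is still expecting batch $\beta(t_s)$ --- and since $r_i$ was a bottleneck receiver at $t_s$, so that its batch ID never drops below $\beta(t_s)$, the receiver $r_i$ also lies in $R_{\beta(t_s)}^{\pi}(t)$; moreover $X_i^{\pi}(t+1) \le X_j^{\pi}(t+1)$ by the pointwise statement applied at $t+1$. Minimising first over the admissible indices $i$ and then over the admissible indices $j$ yields the stated inequality, and also shows the left-hand minimum is over a nonempty set whenever the right-hand one is.

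To prove the pointwise domination I would induct on $t$, the base case $t = t_s$ being exactly the hypothesis. For the inductive step, fix $r_i, r_j$ with $X_i^{\pi}(t) \le X_j^{\pi}(t)$, write $\delta = D^{\pi}(t)$ for the batch $\pi$ transmits at slot $t$, and let $\beta_i(t) = \lfloor X_i^{\pi}(t)/K \rfloor + 1$ and $\beta_j(t)$ be the current batch IDs, so $\beta_i(t) \le \beta_j(t)$ by the inductive hypothesis. If $\delta \notin \{\beta_i(t), \beta_j(t)\}$, or if the common connectivity bit for slot $t$ is OFF, then neither receiver moves and the inequality is preserved verbatim. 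If $\delta = \beta_i(t) = \beta_j(t)$ and the bit is ON, both receivers accept exactly one packet and the gap is unchanged. If $\delta = \beta_j(t) > \beta_i(t)$, then $r_i$'s batch is not served, so $r_i$ does not move while $r_j$ possibly does, and the inequality only loosens.

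The single delicate case --- and where I expect the real work to be --- is $\delta = \beta_i(t) < \beta_j(t)$: here $r_i$ may receive a packet while $r_j$ does not, so I must rule out $r_i$ overtaking $r_j$. A batch-boundary estimate settles it: $r_i$ being in batch $\beta_i(t)$ gives $X_i^{\pi}(t) \le \beta_i(t)K - 1$, hence $X_i^{\pi}(t+1) \le \beta_i(t)K$, whereas $r_j$ being in batch $\beta_j(t) \ge \beta_i(t) + 1$ gives $X_j^{\pi}(t) \ge \beta_i(t)K$; since $r_j$ does not move at slot $t$ we conclude $X_i^{\pi}(t+1) \le \beta_i(t)K \le X_j^{\pi}(t) = X_j^{\pi}(t+1)$, so the domination survives, with $r_i$ at worst having just completed batch $\beta_i(t)$. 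Two remarks on the structure: the coupled-connectivity convention is used essentially, since when $r_i$ and $r_j$ currently share a batch we need them to receive, or not receive, \emph{together}, which is exactly what the sample-path coupling of the connectivity processes provides; and beyond feasibility no property of $\pi$ enters, only the monotonicity of each $X^{\pi}$ in $t$ and the fact that a receiver's batch ID is a nondecreasing step function of its packet count --- which is why the lemma holds for an arbitrary policy $\pi$.
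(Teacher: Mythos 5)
Your argument rests on a pointwise claim that is false in this model, and the failure occurs precisely at the step you treat as trivial. The receivers' channels are \emph{independent} Bernoulli processes; the ``coupled connectivity vectors'' invoked in this section couple the same channel realizations across the two \emph{policies} being compared, not across two \emph{receivers}. There is no ``common connectivity bit'' shared by $r_i$ and $r_j$. Hence in your case $\delta = \beta_i(t) = \beta_j(t)$ it can perfectly well happen that $C_i(t)=1$ while $C_j(t)=0$; if moreover $X_i^{\pi}(t) = X_j^{\pi}(t)$, then $X_i^{\pi}(t+1) = X_i^{\pi}(t)+1 > X_j^{\pi}(t+1)$, and the pairwise domination --- indeed even the min-over-sets version --- is violated on that sample path. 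This is exactly why the lemma cannot be established as an almost-sure, per-sample-path statement and why it is a stochastic-ordering claim. The paper's proof handles precisely this configuration by a sample-path exchange: whenever the two minima coincide and there is a pair with the minimizing $i$-receiver ON and the minimizing $j$-receiver OFF, one passes to the equivalent sample path in which those two connectivity bits are swapped (equiprobable because the channels are i.i.d.\ across receivers and time), which preserves the ordering of the minima; the resulting process has the same distribution, yielding the inequality in the stochastic-dominance sense.

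Your batch-boundary estimate for the case $\delta = \beta_i(t) < \beta_j(t)$, namely $X_i^{\pi}(t+1) \le \beta_i(t)K \le X_j^{\pi}(t) = X_j^{\pi}(t+1)$, is correct and is a genuinely useful observation, but it addresses the wrong ``delicate case.'' To repair the proof you must either adopt the paper's connectivity-exchange argument (or an equivalent exchangeability argument on the channel processes of the receivers in $R_i^{\pi}(t_s)$ versus $R_j^{\pi}(t_s)$), or explicitly downgrade what you are proving from a pointwise inequality to one in distribution --- the induction as written does not go through.
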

\begin{proof}[Proof of Lemma \ref{lemmaA} :] $ $\newline
The proof can be found in Appendix A. Furthermore, this lemma is general and will be also applied later.
\end{proof}

\noindent Using lemma \ref{lemmaA} for policy $LR$, with:
\begin{itemize}
\item $t_s = t_{b-1}^{ON} + 1$. At that time, under $LR$, all of the receivers are in the last batch $b$. Hence, $\beta(t_s) = b$.
\item $R_{\beta(t_s)}^{\pi}(t_s) = R_b^{LR}(t_{b-1}^{ON} + 1)$
\item $R_{i}^{\pi}(t_s) = R_{b-1}^*(t_{b-1}^{ON}) \cap R_{b}^{LR}(t_{b-1}^{ON}+1) = R_{b-1}^*(t_{b-1}^{ON})$. 
\item $R_{j}^{\pi}(t_s) = R_{b}(t_{b-1}^{ON}) \cap R_{b}^{LR}(t_{b-1}^{ON}+1) = R_{b}(t_{b-1}^{ON})$. 
\end{itemize}
we can show that, under $LR$, for $i$ : $r_i \in R_{b-1}^*(t_{b-1}^{ON})$ and $j$ : $r_j \in R_{b}(t_{b-1}^{ON})$
\begin{center}
$\min\limits_{i} X_{i}^{LR}(t) \leq \min\limits_{j} X_{j}^{LR}(t) \hspace{2 em} t \geq t_{b-1}^{ON} \Rightarrow$
\end{center}
\begin{equation}
\max\limits_{i} T_{i}^{LR} \geq \max\limits_{j} T_{j}^{LR}
\label{lemmaResult1}
\end{equation}
We know that the completion time of any policy $\pi'$ is :
\begin{equation}
T^{\pi'} = \max \{\max\limits_i T_i^{\pi'}, \max\limits_j T_j^{\pi'} \}
\label{3}
\end{equation}
Therefore, with the aid of eq. \ref{lemmaResult1} and \ref{2}, we can see that
\begin{equation}
\begin{split}
T^{LR} \stackrel{(\ref*{3})}{=} \max \{\max\limits_i T_i^{LR}, \max\limits_j T_j^{LR} \} \stackrel{(\ref*{lemmaResult1})}{=} \max\limits_i T_i^{LR} \stackrel{(\ref*{2})}{<} \\ \max\limits_i T_i^{\pi} \leq \max \{\max\limits_i T_i^{\pi}, \max\limits_j T_j^{\pi} \} \stackrel{(\ref*{3})}{=} T^{\pi} \Rightarrow \hspace{1.5em} \\ T^{(\pi^{(1)})} < T^{(\pi^{(0)})} \hspace{8 em} 
\end{split}
\label{5}
\end{equation}

Therefore, the delay optimal decision at time slot $t_{b-1}^{ON}$ is that of policy $LR$.
\subsubsection{\textbf{\underline{Milestone time slot $t_{b-1}$}}} \label{t_b-1}

In this part, we show that further reduction on the completion time of policy $\pi^{(1)}$ can be achieved if we switch to the $LR$ policy from the time slot $t_{b-1}$ and onwards. Let $\pi^{(1)}$ be as in section $\ref{t_b-1ON}$ and $\pi^{(2)}$ be a policy that makes the same decisions as $\pi^{(1)}$ until $t_{b-1}$. From that time slot and onwards, $\pi^{(2)}$ agrees with the $LR$ (figure \ref{steps}, Step 2). For the rest of this section, we will refer to $\pi^{(1)}$ and $\pi^{(2)}$ as $\pi$ and $LR$, respectively. We will now show that for the two policies, $LR$ and $\pi$, we have :
\begin{equation}
{t_{b-1}^{ON}}^{(LR)} \leq {t_{b-1}^{ON}}^{(\pi)} \text{ and}
\label{7}
\end{equation}
\begin{equation}
{R_{b-1}^*}^{(LR)}({t_{b-1}^{ON}}^{(\pi)}) \subseteq {R_{b-1}^*}^{(\pi)}({t_{b-1}^{ON}}^{(\pi)})
\label{8}
\end{equation}
i.e., $LR$ will reach faster than $\pi$ at the time slot $t_{b-1}^{ON}$ and when $\pi$ reaches the milestone time slot (${t_{b-1}^{ON}}^{(\pi)}$), the set of the bottleneck receivers $R_{b-1}^*$ under $LR$ will be a subset of the corresponding set under $\pi$.

Let $t'$ denote the \textit{first} time slot that $D^{\pi}(t') \neq D^{LR}(t')$. We will omit the policy superscript for $t \leq t'$ since the receiver sets that we will examine are the same under both policies. We assume that $t_{b-1} \leq t' < t_{b-1}^{ON}$ \footnote{The case where $t' = t_{b-1}^{ON}$ was covered in section \ref{t_b-1ON}.}. Since $D^{LR}(t') = b-1$, 
\begin{center}
$X_i^{LR}(t'+1) \geq X_i^{\pi}(t'+1) \hspace{1.5 em} \forall i : r_i \in {R_{b-1}^*}(t')$
\end{center}

The strict equality in the equation above holds for any $r_i$ that was disconnected at the time slot $t'$. 
Since $LR$ will favour batch $b-1$, the above equation will hold for the subsequent time slots for every receiver $r_i$ that \textit{is in both sets ${R_{b-1}^*}^{(LR)}(t)$ and ${R_{b-1}^*}^{(\pi)}(t)$} (Figure \ref{state11_1}). Thus:
\begin{center}
$X_i^{LR}(t+1) \geq X_i^{\pi}(t+1)$\\
$\forall i : r_i \in {R_{b-1}^*}^{(\pi)}(t) \cap {R_{b-1}^*}^{(LR)}(t)$, $\hspace{1 em} t \geq t' $
\end{center}

Furthermore, under the $LR$ policy, if a receiver does not belong to the set ${R_{b-1}^*}^{(LR)}(t)$ then it will either belong to the set ${R_b}^{(LR)}(t)$ or $R_{F}^{(LR)}(t)$. Therefore, the equation above will hold for such receivers as long as they also belong in ${R_{b-1}^*}^{(\pi)}(t)$, under $\pi$. Thus,
\begin{equation}
X_i^{LR}(t+1) \geq X_i^{\pi}(t+1), \hspace{1 em} \forall i : r_i \in {R_{b-1}^*}^{(\pi)}(t)
\label{6}
\end{equation}

\begin{figure}
\centering
\includegraphics[scale = 0.28]{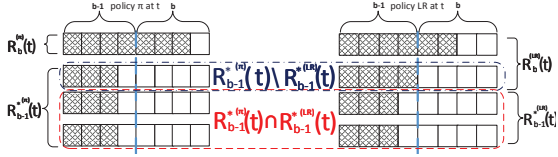}
\caption{System under $\pi$ and $LR$ at $t$ (Section \ref{t_b-1}).}
\label{state11_1}
\end{figure}

\noindent \textbf{Claim:} Since no new receivers can enter either the set ${R_{b-1}^*}^{(\pi)}(t)$ or ${R_{b-1}^*}^{(LR)}(t)$:
\begin{equation}
{R_{b-1}^*}^{(LR)}(t+1) \subseteq {R_{b-1}^*}^{(\pi)}(t+1), \hspace{1.5 em} t \geq t'    
\label{6.1}
\end{equation}
%
%
We will prove this statement using contradiction. Let $r_k$ be a receiver that belongs to the set ${R_{b-1}^*}^{(LR)}(t)$ and does not belong to the set ${R_{b-1}^*}^{(\pi)}(t)$, at some $t \geq t'$.
\begin{itemize}

\item Since $r_k$ does not belong to  ${R_{b-1}^*}^{(\pi)}(t)$, then $r_k$ will either belong to the set ${R_{b}}^{(\pi)}(t)$ or ${R_{F}}^{(\pi)}(t)$. Therefore, $X_k^{(LR)}(t) < X_k^{(\pi)}(t)$.

\item By the definition of $t'$ (and $t_{b-1}$), $X_k^{(LR)}(t') = X_k^{(\pi)}(t')$.
\end{itemize}

Thus, there will exist a $t''$ ($t' \leq t'' < t$) where $X_k^{(LR)}(t'') = X_k^{(\pi)}(t'')$ and $X_k^{(LR)}(t'' + 1) < X_k^{(\pi)}(t'' + 1)$. 
Eq. \ref{6} contradicts the existence of such an $r_k$. Hence, our claim (eq. \ref{6.1}) holds.

\noindent From eq. \ref{6} and \ref{6.1}, we can see that eq. \ref{7} and \ref{8} hold.\\

The next step is to show that if eq. \ref{7} and \ref{8} hold, the completion time of policy LR is not greater than that of $\pi$. Towards that purpose, we will introduce the following lemma.

\begin{lemma}
 Let $\widetilde{\pi}$ and $\pi$ be two policies that make the delay optimal decision at ${t_{b-1}^{ON}}^{(\widetilde{\pi})}$ and ${t_{b-1}^{ON}}^{(\pi)}$, respectively. If both of the statements below are true, then $T^{\widetilde{\pi}} \leq T^{\pi}$.
\begin{enumerate}
\item ${t_{b-1}^{ON}}^{(\widetilde{\pi})} \leq {t_{b-1}^{ON}}^{(\pi)}$
\item ${R_{b-1}^*}^{(\widetilde{\pi})}({t_{b-1}^{ON}}^{(\pi)}) \subseteq {R^*_{b-1}}^{(\pi)}({t_{b-1}^{ON}}^{(\pi)})$ \footnote{Note here that we are comparing the set ${R^*_{b-1}}$ of policy $\widetilde{\pi}$ with the one of policy $\pi$ at time ${t_{b-1}^{ON}}^{(\pi)}$ (which is defined by policy $\pi$).}
\end{enumerate}
\label{lemma1}
\end{lemma}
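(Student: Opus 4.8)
The plan is to reduce the completion time of each of the two policies to an explicit functional of its ``bottleneck'' receivers at the relevant milestone, and then to compare these two functionals using hypotheses~1 and~2 together with the coupling of the connectivity vectors. First I would describe the state one slot past a milestone: by the analysis of Section~\ref{t_b-1ON}, the delay-optimal decision at ${t_{b-1}^{ON}}$ is the $LR$ decision, i.e.\ to transmit batch $b-1$; since \emph{all} bottleneck receivers are connected at the milestone, they all receive that packet and move to batch $b$, so that one slot later every receiver is either in batch $b$ or has finished the file. From then on the only useful batch is $b$, hence --- whatever the policy does afterwards, and in particular under its $LR$ continuation --- every connected receiver receives one packet at every slot; thus for $t$ past the milestone, $X_i(t)$ equals its value one slot after the milestone plus the number of intervening slots in which $r_i$ is connected (truncated at $F$).

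Next I would single out the receivers that determine the completion time. One slot after the milestone the receivers that were in ${R_{b-1}^*}$ have received exactly $(b-1)K$ packets while every other receiver has received at least $(b-1)K$; applying Lemma~\ref{lemmaA} with $t_s$ that slot, $\beta(t_s)=b$, $R_i$ the former-bottleneck set and $R_j$ the remaining receivers, the former-bottleneck receivers remain (weakly) the least advanced for all later times, so that
\[
T^{\widetilde\pi}=\max\bigl\{\,T_i^{\widetilde\pi}:\ r_i\in {R_{b-1}^*}^{(\widetilde\pi)}\bigl({t_{b-1}^{ON}}^{(\widetilde\pi)}\bigr)\bigr\},
\]
and identically for $\pi$. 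For any receiver $r_i$ appearing in either maximum, $T_i$ is just the slot of the $K$-th connected occurrence of $r_i$ after the corresponding milestone, since it needs exactly $F-(b-1)K=K$ further packets and gains one per connected slot.

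It then remains to compare the two maxima. Under the coupling the index of the $K$-th connected slot of $r_i$ is nondecreasing in its starting instant, so hypothesis~1 (${t_{b-1}^{ON}}^{(\widetilde\pi)}\le {t_{b-1}^{ON}}^{(\pi)}$) gives $T_i^{\widetilde\pi}\le T_i^{\pi}$ for every $r_i$ that is a bottleneck receiver under \emph{both} policies, while hypothesis~2 ensures that the set over which the maximum for $T^{\widetilde\pi}$ is taken is contained in the set relevant to $T^{\pi}$; when the two milestones coincide this is immediate from hypothesis~2, and when ${t_{b-1}^{ON}}^{(\widetilde\pi)}<{t_{b-1}^{ON}}^{(\pi)}$ one first has to propagate the bottleneck set of $\widetilde\pi$ forward to $\pi$'s milestone using Eqs.~(\ref{6})--(\ref{6.1}). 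Taking maxima over these (sub)sets then yields $T^{\widetilde\pi}\le T^{\pi}$.

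I expect the middle step to be the main obstacle --- the reduction of $T^{\pi'}$ to a maximum over the receivers that were one packet short of batch $b-1$ at the milestone. It hinges on a careful use of Lemma~\ref{lemmaA} to guarantee that those receivers really do stay the last to finish, and, in the case of unequal milestones, on reconciling the bottleneck set that governs $T^{\widetilde\pi}$ (defined at $\widetilde\pi$'s milestone) with the one appearing in hypothesis~2 (defined at $\pi$'s milestone); the remaining pieces are routine sample-path coupling arguments.
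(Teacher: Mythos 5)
Your overall skeleton matches the paper's: use Lemma~\ref{lemmaA} to argue that the receivers who were one packet short of batch $b-1$ at the milestone remain the least advanced thereafter, so that $T^{\pi'}$ reduces to a maximum over that set, and then compare the two maxima; your explicit description of $T_i$ as the $K$-th connected occurrence after the milestone is a clean (and correct) restatement of the paper's ``no more conflict slots'' observation. In the case ${t_{b-1}^{ON}}^{(\widetilde{\pi})}={t_{b-1}^{ON}}^{(\pi)}$ your argument is sound: for $r_i$ in both bottleneck sets the coupled completion times coincide, and hypothesis~2 gives the inclusion of index sets needed to compare the maxima, which is essentially the paper's Case~1.

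The gap is in the case ${t_{b-1}^{ON}}^{(\widetilde{\pi})}<{t_{b-1}^{ON}}^{(\pi)}$. There, one slot after its own milestone $\widetilde{\pi}$ has emptied $R_{b-1}^*$, so ${R_{b-1}^*}^{(\widetilde{\pi})}({t_{b-1}^{ON}}^{(\pi)})=\emptyset$ and hypothesis~2 is vacuously true; it therefore tells you nothing about the set that actually governs $T^{\widetilde{\pi}}$, namely ${R_{b-1}^*}^{(\widetilde{\pi})}({t_{b-1}^{ON}}^{(\widetilde{\pi})})$, and in particular does not give the containment of that set in ${R_{b-1}^*}^{(\pi)}({t_{b-1}^{ON}}^{(\pi)})$ that your ``maximum over a subset'' step requires. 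A receiver in $\widetilde{\pi}$'s bottleneck set at its earlier milestone may well have left $R_{b-1}^*$ under $\pi$ before $\pi$'s milestone, so it appears in neither set of hypothesis~2, and your per-receiver comparison (``$K$-th connected slot is nondecreasing in the starting instant'') is not even applicable to it as stated, since under $\pi$ that receiver may still suffer conflict slots after completing batch $b-1$ but before ${t_{b-1}^{ON}}^{(\pi)}$. The propagation via eqs.~(\ref{6})--(\ref{6.1}) that you invoke cannot repair this: those relations push sets \emph{forward} in time (where $\widetilde{\pi}$'s set is already empty), whereas you need to relate $\widetilde{\pi}$'s set at its own earlier milestone to $\pi$'s set at the later one. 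The paper avoids the issue entirely in this case: it works at ${t_{b-1}^{ON}}^{(\pi)}$, picks a bottleneck (least-received) receiver $r_l$ of $\widetilde{\pi}$ there, observes the strict state dominance $X_l^{(\widetilde{\pi})}(t)>X_l^{(\pi)}(t)$ from that slot on (no further conflict slots under $\widetilde{\pi}$), concludes $T_l^{\widetilde{\pi}}<T_l^{\pi}\le T^{\pi}$, and uses Lemma~\ref{lemmaA} to show $T^{\widetilde{\pi}}=T_l^{\widetilde{\pi}}$. You would need to replace your set-inclusion step in the unequal-milestone case with an argument of this kind.
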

\begin{proof}[Proof :] 
The intuition for the lemma is that $\widetilde{\pi}$ is in a more advantageous position at time ${t_{b-1}^{ON}}^{(\pi)}$ and therefore completes the transfer earlier. The proof is in line with the arguments of section \ref{t_b-1ON} and can be found in Appendix B.
\end{proof}
Therefore, by using eq. \ref{7}, \ref{8}, lemma \ref{lemma1} and the findings of section \ref{t_b-1ON} , we can see that : 
\begin{center}
$T^{(\pi^{(2)})} \leq T^{(\pi^{(1)})}$
\end{center}
Thus, given any policy $\pi$ the file transfer time is reduced when $\pi$ is modified to become $LR$ from $t_{b-1}$ and onwards.

\subsubsection{\textbf{\underline{Milestone time slot $t_{b-2}^{ON}$}}} \label{t_b-2ON} 

In this subsection we assume the system to be in state $S_{b-2}$. This situation is depicted in figure \ref{state2}, where the set $R_F(t)$ is omitted since it does not affect our analysis. As before, in $S_{b-2}$, the bottleneck receivers are now at the end of batch $b-2$ (set $R_{b-2}^*(t)$). 
Let $\pi^{(2)}$ be as in section \ref{t_b-1} and $\pi^{(3)}$ be a policy that agrees with $\pi^{(2)}$ until $t_{b-2}^{ON}$ and with $LR$ afterwards (refer to figure \ref{steps}, Step 3). We will show that $T^{\pi^{(3)}} \leq T^{\pi^{(2)}}$. Hereinafter, for notational convenience, we will refer to $\pi^{(2)}$ and $\pi^{(3)}$ as $\pi$ and $LR$, respectively. At time $t_{b-2}^{ON}$ all bottleneck receivers will be ON. At that time the decision of policy $\pi$ can be either the same as the decision of $LR$ (i.e. transmit batch $b-2$) or not.

\begin{figure}
\centering
\includegraphics[scale = 0.4]{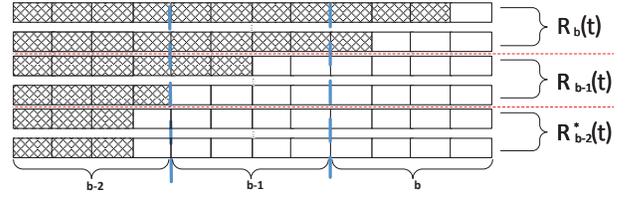}
\caption{State $S_{b-2}$, $K=4$}
\label{state2}
\end{figure}


\vspace{1 em}
\noindent \textit{Case 1) $D^{\pi}(t_{b-2}^{ON}) \neq D^{LR}(t_{b-2}^{ON})$}.

At $t_{b-2}^{ON}+1$, under $LR$, all of the receivers of the set $R_{b-2}^*(t_{b-2}^{ON})$ will be moved to the set $ R_{b-1}(t_{b-2}^{ON} + 1)$. For $t > t_{b-2}^{ON} + 1$, $D^{LR}(t) = b-1$ if at least one of the receivers of the set $R_{b-1}^{(LR)}(t)$ is ON at $t$. 
\begin{itemize}
\item Let $i$ be such that $r_i \in R_{b-2}^*(t_{b-2}^{ON})$.
\item Let $j$ be such that $r_j \in R_{b-1}(t_{b-2}^{ON}) \cup R_{b}(t_{b-2}^{ON})$.
\end{itemize}
The $LR$ policy will transmit a packet to ALL receivers $r_i$ whereas $\pi$ will sent a packet to some receivers $r_j$ (depending on the decision) that are ON. Thus,
\begin{center}
$X_i^{LR}(t_{b-2}^{ON}+1) > X_i^{\pi}(t_{b-2}^{ON}+1)$,\hspace{2 em}
$\forall i$ as above
\end{center}

Figure \ref{state22} depicts the system under policy $\pi$ at time $t_{b-2}^{ON} + 1$ (left) and under policy $LR$ (right). In the middle we can see the connectivities of the receivers at time $t_{b-2}^{ON}$. Policy $\pi$ might transmit batch $b-1$ (red) or $b$ (green) whereas the $LR$ policy will transmit batch $b-2$ (blue). We keep the sets of the time slot $t_{b-2}^{ON}$ in order to illustrate our point. As we can see from the figure, all $r_i$ will be moved to set $R_{b-1}$ at $t_{b-2}^{ON} + 1$ under policy $LR$. Under policy $\pi$, some of the receivers $r_j$ might change sets but the set $R_{b-2}^*(t_{b-2}^{ON})$ will not be affected.

\begin{figure}
\centering
\includegraphics[scale = 0.3]{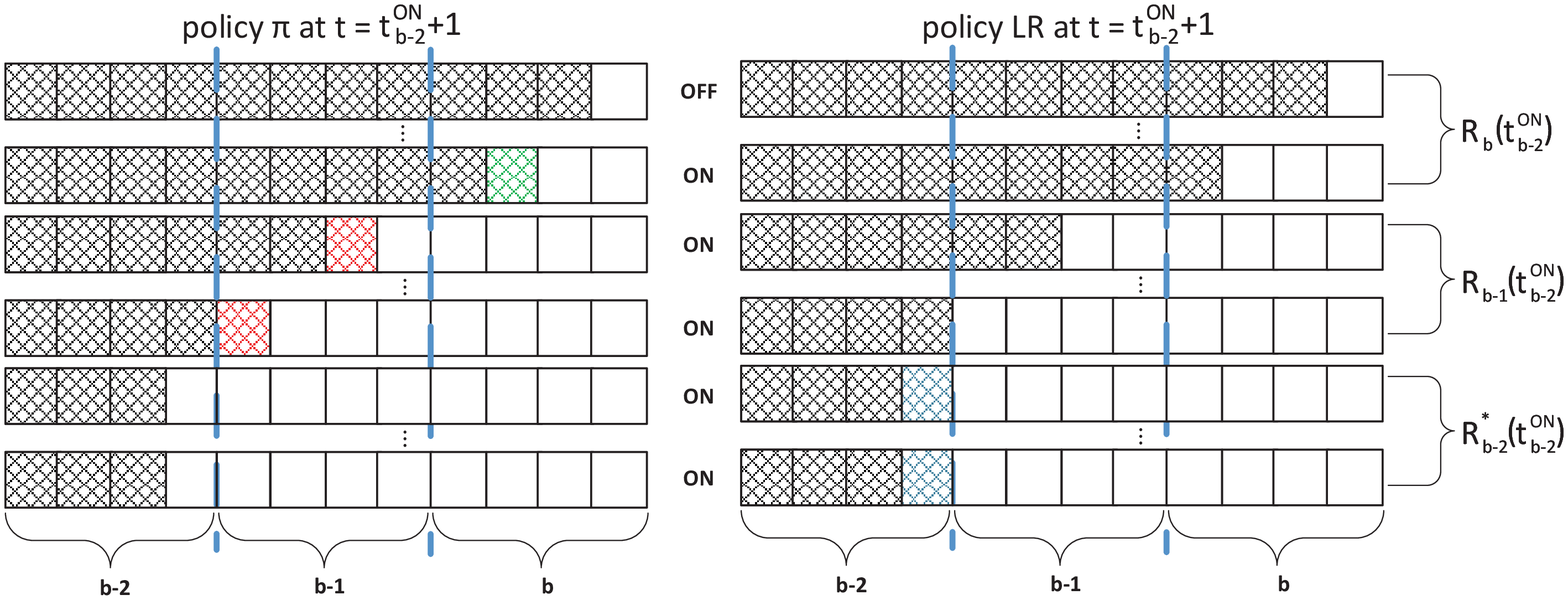}
\caption{System at time $t_{b-2}^{ON} + 1$ with $\pi$ (left) and $LR$ (right).}
\label{state22}
\end{figure}

Under $LR$, each $r_i$ will receive a packet at each time slot that $r_i$ is ON, until $r_i$ successfully decodes batch $b-1$ (i.e. enters set $R_b(t)$). Under $\pi$, a receiver $r_i$ \textit{might} receive a packet of batch $b-1$. Hence,
\begin{equation}
\begin{split}
X_{i}^{LR}(t+1) > X_{i}^{\pi}(t+1), \hspace{1.5 em} t \geq t_{b-2}^{ON},\\ \forall i : r_{i} \in R_{b-2}^*(t_{b-2}^{ON})\cap R_{b-1}^{(LR)}(t) \hspace{1em}
\label{9}
\end{split}
\end{equation}
{(i.e. for all $r_i$'s that have not yet entered batch $b$, under $LR$.)}\\
Note : In eq. \ref{9} we used $t+1$ in order to show that this hold up to and including the time slot that $r_i$  \textit{enters} the set $R_b(t)$ under $LR$.

From eq. \ref{9} we can observe that, for each $r_i$, the required time to enter the set $R_{b-1}^*$ (i.e. be one packet short of decoding batch $b-1$) will be smaller under $LR$ than under $\pi$ (since each receiver will have received more packet under $LR$).

By applying lemma \ref{lemmaA} with the following parameters,
\begin{itemize}
\item $t_s = t_{b-2}^{ON} + 1$.
\item  $R_{\beta(t_s)}^{LR} = R_{b-1}^{LR}(t_{b-2}^{ON}+1)$.
\item $R_i^{LR} = R_{b-2}^*(t_{b-2}^{ON})$.
\item $R_j^{LR} = R_{b-1}(t_{b-2}^{ON})$.
\end{itemize}
we can show that, for $i$ as above and $j : r_j \in R_{b-1}(t_{b-2}^{ON}) \cap R_{b-1}^{(LR)}(t)$:
\begin{equation}
\min\limits_{i} X_{i}^{LR}(t+1) \leq \min\limits_{j} X_{j}^{LR}(t+1) \hspace{2 em} t \geq t_{b-2}^{ON}
\label{10}
\end{equation}
\noindent Hence, based on eq. \ref{9} and \ref{10}, we can see that, for $t \geq t_{b-2}^{ON}$:
\begin{center}
$\min\limits_{i} X_{i}^{\pi}(t+1) < \min\limits_{i} X_{i}^{LR}(t+1) \leq \min\limits_{j} X_{j}^{LR}(t+1)$
\end{center}
Therefore, at time $t_{b-1}^{(\pi)}$, under $LR$, no receiver will be in the set $R_{b-1}(t_{b-1}^{(\pi)})$ (${R_{b-1}^*}^{(LR)}({t_{b-1}}^{(\pi)}) = \emptyset$)\footnote{This holds due to the strict inequality of eq. \ref{9}}. Thus,
\begin{center}
$t_{b-1}^{(LR)} < t_{b-1}^{(\pi)}$ and\\
${R_{b-1}^*}^{(LR)}({t_{b-1}}^{(\pi)}) \subseteq {R_{b-1}^*}^{(\pi)}({t_{b-1}}^{(\pi)})$
\end{center}

\vspace{1 em}
\noindent \textit{Case 2) $D^{\pi}(t_{b-2}^{ON}) = D^{LR}(t_{b-2}^{ON}) = b-2$}

In this case\footnote{This part also covers the case where $t_{b-2}^{ON}$ is not a conflict slot.}, all of the receivers of the set $R_{b-2}^*(t_{b-2}^{ON})$ will be moved to the set $R_{b-1}(t_{b-2}^{ON} + 1)$, under both policies. 
\begin{figure}
\centering
\includegraphics[scale = 0.4]{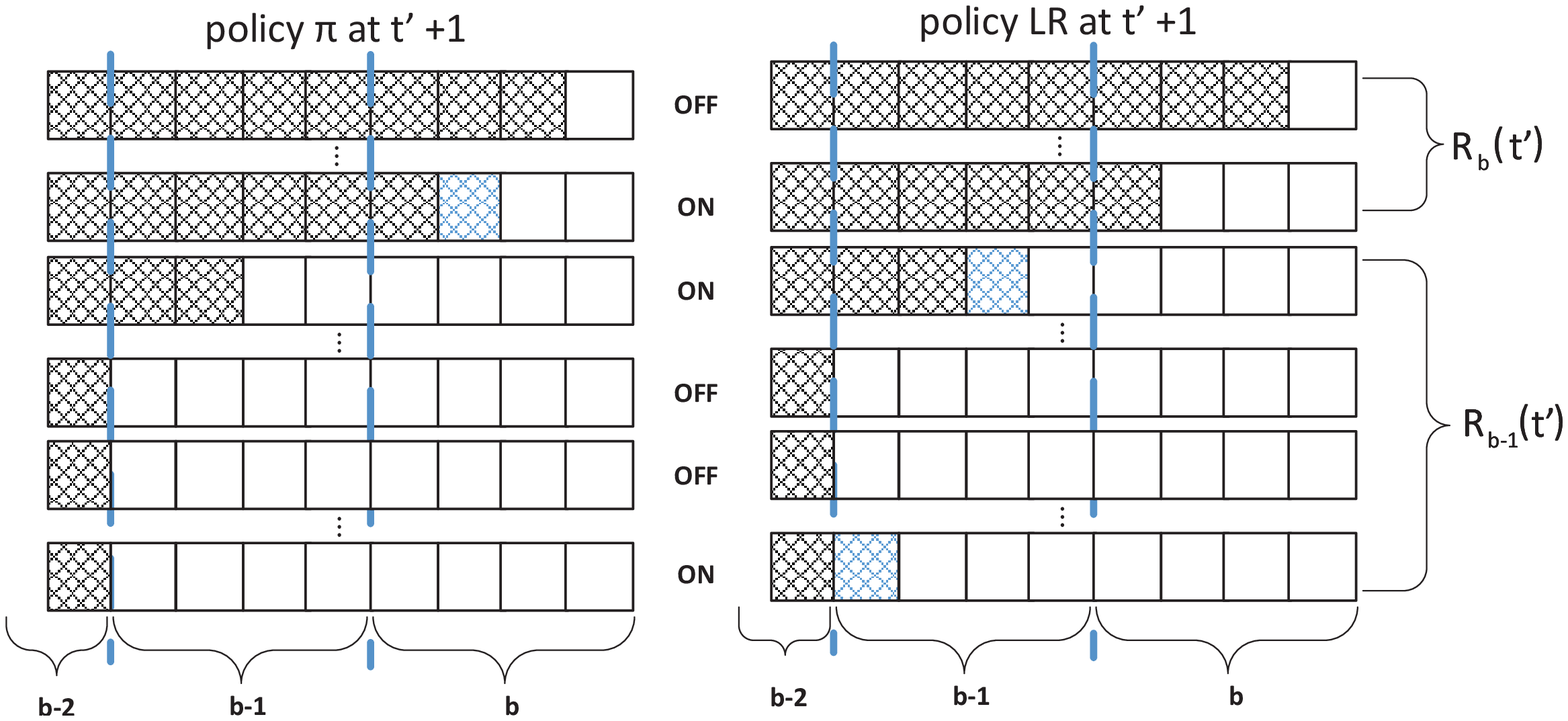}
\caption{System at time $t'$ - section \ref{t_b-2ON}, Case 2.}
\label{state22-case2}
\end{figure}
For $t \geq t_{b-2}^{ON} +1$ the system under both policies will be the same until $\pi$ makes a different decision than $LR$. Let $t'$ be the first time slot (after $t_{b-2}^{ON}$) that $D^{\pi}(t') \neq D^{LR}(t')$ (figure \ref{state22-case2}). Then, \begin{center}
$X_{i}^{LR}(t'+1) \geq X_{i}^{\pi}(t'+1)$ \footnote{The strict inequality is for all $r_i$ that we ON at $t'$ and the equality is for the rest of the $r_i$'s.}, $r_{i} \in R_{b-1}(t')$\footnote{We omitted the policy superscript because the sets $R_{b-1}^{(\pi)}(t')$ and $R_{b-1}^{(LR)}(t')$ are identical.}
\end{center} 
This equation will hold for the subsequent time slots, for the receivers that are in the batch $b-1$ under both policies. Similarly with the derivation of eq. \ref{6}, we can show that:
\begin{equation}
X_{i}^{LR}(t+1) \geq X_{i}^{\pi}(t+1), \hspace{2 em} t \geq t', r_{i} \in R_{b-1}^{(\pi)}(t)
\label{2.0}
\end{equation}
From the equations above\footnote{and from the fact that no receiver can enter either set $R_{b-1}^{(\pi)}(t)$ or $R_{b-1}^{(LR)}(t)$ for $t \geq t_{b-2}^{ON}$}, we can see that since each receiver of the set $R_{b-1}^{(\pi)}(t)$ will have received more (or an equal number of) packets under $LR$ than under $\pi$ from $t_{b-2}^{ON}$ and onwards then,
\begin{equation}
t_{b-1}^{(LR)} \leq t_{b-1}^{(\pi)}
\label{2.1}
\end{equation} 
\begin{equation}
{R_{b-1}^*}^{(LR)}(t_{b-1}^{(\pi)}) \subseteq {R_{b-1}^*}^{(\pi)}(t_{b-1}^{(\pi)})
\label{2.2}
\end{equation}

Hence we can see that in both cases, eq. \ref{2.1} and \ref{2.2} hold. As in section \ref{t_b-1}, the next step is to develop a lemma in order to compare the completion time of two policies that take action prior to the time slot $t_{b-1}$.

\begin{lemma}
 Let $\widetilde{\pi}$ and $\pi$ be two policies that make the delay optimal decisions starting from ${t_{b-1}}^{(\widetilde{\pi})}$ and ${t_{b-1}}^{({\pi})}$, respectively. If both of the statements below are true, then $T^{\widetilde{\pi}} \leq T^{\pi}$.
\begin{enumerate}
\item ${t_{b-1}}^{(\widetilde{\pi})} \leq {t_{b-1}}^{(\pi)}$
\item ${R_{b-1}^{*}}^{(\widetilde{\pi})}({t_{b-1}}^{(\pi)}) \subseteq {R_{b-1}^*}^{(\pi)}({t_{b-1}}^{(\pi)})$
\end{enumerate}
\label{lemma2}
\end{lemma}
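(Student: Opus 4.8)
The plan is to reduce Lemma~\ref{lemma2} to Lemma~\ref{lemma1}: I will show that the two hypotheses of Lemma~\ref{lemma2}, which are stated at the milestone $t_{b-1}$, propagate forward to become exactly the two hypotheses of Lemma~\ref{lemma1} at the later milestone $t_{b-1}^{ON}$, after which Lemma~\ref{lemma1} gives $T^{\widetilde{\pi}}\leq T^{\pi}$ immediately. Recall that, by the analysis of Section~\ref{t_b-1}, making the delay-optimal decisions from $t_{b-1}$ onward means following $LR$ from $t_{b-1}$; in particular both $\widetilde{\pi}$ and $\pi$ prioritise batch $b-1$ while their system is in state $S_{b-1}$, and both are, a fortiori, delay optimal at their respective $t_{b-1}^{ON}$, so they qualify for Lemma~\ref{lemma1}. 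As in the rest of Section~\ref{proof} I couple the connectivity vectors of the two systems and I start the comparison at the later instant ${t_{b-1}}^{(\pi)}$ (for $t\in[{t_{b-1}}^{(\widetilde{\pi})},{t_{b-1}}^{(\pi)}]$ the system under $\widetilde{\pi}$ is already in $S_{b-1}$ running $LR$).

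First I would establish the base comparison at $t={t_{b-1}}^{(\pi)}$. Every $r_i\in{R_{b-1}^*}^{(\pi)}({t_{b-1}}^{(\pi)})$ has $X_i^{\pi}({t_{b-1}}^{(\pi)})=(b-1)K-1$. Under $\widetilde{\pi}$, by hypothesis~(2) either $r_i\in{R_{b-1}^*}^{(\widetilde{\pi})}({t_{b-1}}^{(\pi)})$, in which case $X_i^{\widetilde{\pi}}=(b-1)K-1$, or $r_i$ has already left the bottleneck set; since in state $S_{b-1}$ every receiver has received at least $(b-1)K-1$ packets, in the latter case $r_i\in{R_b}^{(\widetilde{\pi})}\cup{R_F}^{(\widetilde{\pi})}$ and $X_i^{\widetilde{\pi}}\geq(b-1)K$. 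In all cases $X_i^{\widetilde{\pi}}({t_{b-1}}^{(\pi)})\geq X_i^{\pi}({t_{b-1}}^{(\pi)})$, and hypothesis~(2) gives ${R_{b-1}^*}^{(\widetilde{\pi})}({t_{b-1}}^{(\pi)})\subseteq{R_{b-1}^*}^{(\pi)}({t_{b-1}}^{(\pi)})$; this is exactly the situation of eq.~\ref{6}--\ref{6.1} in Section~\ref{t_b-1}, now with $\widetilde{\pi}$ in the role of $LR$ and $\pi$ in the role of the generic policy.

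Next I would run the induction of Section~\ref{t_b-1} verbatim: since no receiver can enter ${R_{b-1}^*}$ once the system is in $S_{b-1}$, and since $LR$ delivers a batch $b-1$ packet to every bottleneck receiver that is ON while the coupled connectivities make the two systems serve (or fail to serve) a given $r_i$ in lockstep, the inequality $X_i^{\widetilde{\pi}}(t)\geq X_i^{\pi}(t)$ is maintained for every $r_i$ still in ${R_{b-1}^*}^{(\pi)}(t)$, and the inclusion ${R_{b-1}^*}^{(\widetilde{\pi})}(t)\subseteq{R_{b-1}^*}^{(\pi)}(t)$ is preserved for all $t\in[{t_{b-1}}^{(\pi)},{t_{b-1}^{ON}}^{(\pi)}]$ — a receiver $r_k$ witnessing a violation of the inclusion would, by the same contradiction argument used for eq.~\ref{6.1}, force $X_k^{\widetilde{\pi}}$ to fall strictly below $X_k^{\pi}$ at an intermediate slot, which contradicts the inequality just established. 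Evaluating at $t={t_{b-1}^{ON}}^{(\pi)}$: the subset relation is hypothesis~(2) of Lemma~\ref{lemma1}, and since all members of ${R_{b-1}^*}^{(\pi)}$ are ON at that slot, all members of ${R_{b-1}^*}^{(\widetilde{\pi})}({t_{b-1}^{ON}}^{(\pi)})$ are too, so $\widetilde{\pi}$ has reached its $t_{b-1}^{ON}$ no later, giving hypothesis~(1). Lemma~\ref{lemma1} then yields $T^{\widetilde{\pi}}\leq T^{\pi}$.

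The step I expect to be the main obstacle is this forward propagation: one must verify slot by slot, under the coupled connectivities, that neither the pointwise ordering of the surviving bottleneck receivers' counts nor the inclusion of the ${R_{b-1}^*}$ sets can be broken before $\pi$ reaches its $t_{b-1}^{ON}$. The bookkeeping also has to absorb the degenerate case in which ${R_{b-1}^*}^{(\widetilde{\pi})}$ empties out before ${t_{b-1}^{ON}}^{(\pi)}$ (so $\widetilde{\pi}$ has already left state $S_{b-1}$): there hypothesis~(2) of Lemma~\ref{lemma1} is vacuous and hypothesis~(1) holds because $\widetilde{\pi}$'s milestone necessarily occurred earlier still, so the reduction goes through unchanged.
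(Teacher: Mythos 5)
Your proposal is correct and follows essentially the same route as the paper's Appendix~C: propagate the two hypotheses forward from $t_{b-1}^{(\pi)}$ to $t_{b-1}^{ON\,(\pi)}$ using the Section~\ref{t_b-1} arguments (monotone counts plus preserved inclusion of the $R_{b-1}^*$ sets under coupled connectivities) and then invoke Lemma~\ref{lemma1}. The only cosmetic difference is that you fold the empty-set case into Lemma~\ref{lemma1} via a vacuous inclusion, whereas the paper handles it with a short direct bottleneck argument; the two are interchangeable.
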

\begin{proof}[Proof :]
The proof of this Lemma is similar to the proof of lemma \ref{lemma1} and can be found in Appendix C.
\end{proof}
Using lemma \ref{lemma2}, we can see that :
\begin{center}
$T^{\pi^{(3)}} \leq T^{\pi^{(2)}}$
\end{center}

%

\subsubsection{\textbf{\underline{Milestone time slot $t_{b-2}$}}} \label{t_b-2}

At $t_{b-2}$, some of the bottleneck receivers ($R_{b-2}^{*}(t)$) may be OFF. Clearly, $t_{b-2}$ $\leq t_{b-2}^{ON}$.  Let $\pi^{(3)}$ be as in section \ref{t_b-2ON} and $\pi^{(4)}$ be a policy that agrees with $\pi^{(3)}$ until $t_{b-2}$ and with $LR$ at and after that time slot (figure \ref{steps}, Step 4). We will show that $T^{\pi^{(4)}} \leq T^{\pi^{(3)}}$. For the rest of this section, we will refer to $\pi^{(3)}$ and $\pi^{(4)}$ as $\pi$ and $LR$, respectively.
We denote with $t'$ the first time slot ($t_{b-2} \leq t' < \min\{{t_{b-2}^{ON}}^{(LR)},{t_{b-2}^{ON}}^{(\pi)}\}$) where the decisions of $LR$ and $\pi$ are different (i.e. $D^{\pi}(t') \neq D^{LR}(t')$). Depending on the value of $D^{LR}(t')$ we will distinguish two cases.\\

\noindent \textit{Case 1) $D^{LR}(t') = b-2$}

In this case, the system under both policies is the same up to the time slot $t'$ and the $LR$ policy will transmit batch $b-2$ at $t'$:
\begin{equation*}
X_i^{LR}(t'+1) \geq X_i^{\pi}(t'+1)\footnote{As in section \ref{t_b-1}, the equality is for any $r_i$ that was disconnected at $t'$.} \hspace{1.5 em} \forall i : r_i \in {R_{b-2}^*}(t')
\end{equation*}

It is easy to see that, since $LR$ will transmit batch $b-2$ \textit{at each} time slot that at least one of the $r_i$'s is ON (up to and including time slot ${t_{b-2}^{ON}}^{(LR)}$), the above equation will hold for the subsequent time slots. That is : 
\begin{equation}
X_i^{LR}(t+1) \geq X_i^{\pi}(t+1) \hspace{1.5 em} t \geq t', \forall i : r_i \in {R_{b-2}^*}^{(\pi)}(t) \footnote{The reason that this equation holds for $r_i \in {R_{b-2}^*}^{(\pi)}(t)$ is analogous to the one analysed in section \ref{t_b-1} eq. \ref{6}.}
\label{11}
\end{equation}

At each $t$, every $r_i$ that is ON will \textit{surely} receive a packet with $LR$ (and thus move to set $R_{b-1}^{(LR)}(t+1)$), whereas it \textit{might} receive a packet with $\pi$. Similarly to section \ref{t_b-1}, we can show that :

\begin{equation}
{R_{b-2}^*}^{(LR)}(t+1) \subseteq {R_{b-2}^*}^{(\pi)}(t+1), \hspace{1.5 em} t > t'
\label{12}
\end{equation}
\begin{equation}
{t_{b-2}^{ON}}^{(LR)} \leq {t_{b-2}^{ON}}^{(\pi)}
\label{13}
\end{equation}
and from eq \ref{12} :
\begin{equation}
{R_{b-2}^*}^{(LR)}({t_{b-2}^{ON}}^{(\pi)}) \subseteq {R_{b-2}^*}^{(\pi)}({t_{b-2}^{ON}}^{(\pi)})
\label{14}
\end{equation}

As in section \ref{t_b-2ON}, a lemma needs to be introduced in order to compare the completion times of two policies that act prior to $t_{b-2}^{ON}$.

\begin{lemma}
Let $\widetilde{\pi}$ and $\pi$ be two policies that make the delay optimal decisions starting from ${t_{b-2}^{ON}}^{(\widetilde{\pi})}$ and ${t_{b-2}^{ON}}^{({\pi})}$, respectively. If both of the statements below are true, then $T^{\widetilde{\pi}} \leq T^{\pi}$.
\begin{enumerate}
\item ${t_{b-2}^{ON}}^{(\widetilde{\pi})} \leq {t_{b-2}^{ON}}^{(\pi)}$ and
\item ${R_{b-2}^{*}}^{(\widetilde{\pi})}({t_{b-2}^{ON}}^{(\pi)}) \subseteq {R_{b-2}^*}^{(\pi)}({t_{b-2}^{ON}}^{(\pi)})$
\end{enumerate}
\label{lemma3}
\end{lemma}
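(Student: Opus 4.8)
The plan is to transcribe, one batch earlier, the argument already used for Lemma~\ref{lemma1} and Lemma~\ref{lemma2}: the milestone $t_{b-1}^{ON}$ and the set $R_{b-1}^*$ are replaced throughout by $t_{b-2}^{ON}$ and $R_{b-2}^*$, and the fact exploited there --- that $LR$ is delay optimal from $t_{b-1}$ onward --- is replaced by the conclusion of section~\ref{t_b-2ON}, namely that from $t_{b-2}^{ON}$ onward the delay optimal decision coincides with that of $LR$. Since $\widetilde{\pi}$ and $\pi$ make delay optimal decisions from their respective milestones, their completion times are unchanged if each is taken to follow $LR$ after its own milestone; I would assume this and, as throughout section~\ref{proof}, work under coupled connectivity vectors. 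Put $\tau := {t_{b-2}^{ON}}^{(\pi)}$.

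The first step is to describe the two systems at $\tau$. Under $\pi$ the system is in state $S_{b-2}$, so every receiver not in $R_{b-2}^*$ already lies in $R_{b-1}$, $R_b$ or $R_F$; running Lemma~\ref{lemmaA} along $\pi$'s $LR$ trajectory from $\tau$, the receivers of ${R_{b-2}^*}^{(\pi)}(\tau)$ stay the slowest, so $T^{\pi} = \max\{T_i^{\pi} : r_i \in {R_{b-2}^*}^{(\pi)}(\tau)\}$. On the $\widetilde{\pi}$ side, hypothesis~(1) together with the fact that $\widetilde{\pi}$ has been running $LR$ since ${t_{b-2}^{ON}}^{(\widetilde{\pi})} \le \tau$ --- under which all of $\widetilde{\pi}$'s batch-$(b-2)$ receivers are served at ${t_{b-2}^{ON}}^{(\widetilde{\pi})}$ and move on --- force ${R_{b-2}^*}^{(\widetilde{\pi})}(\tau) = \emptyset$ unless the two milestones coincide; combining this with hypothesis~(2) gives, for every $r_i \in {R_{b-2}^*}^{(\pi)}(\tau)$, the sample-path comparison $X_i^{\widetilde{\pi}}(\tau) \ge X_i^{\pi}(\tau)$, an equality exactly when $r_i$ is still in $R_{b-2}^*$ under $\widetilde{\pi}$ and a strict inequality otherwise, because such an $r_i$ has already reached batch $b-1$ under $\widetilde{\pi}$.

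The second step propagates this comparison along the two coupled $LR$ trajectories: exactly as in the contradiction argument for the Claim at eq.~\ref{6.1} and for eq.~\ref{9}, if $\pi$ delivers a packet to an ON receiver $r_i$ but $LR$ under $\widetilde{\pi}$ does not, then $r_i$ sits at $\pi$'s minimum batch yet strictly above $\widetilde{\pi}$'s minimum batch, which forces $X_i^{\widetilde{\pi}} \ge X_i^{\pi}+1$ at that slot and so preserves the ordering; hence $X_i^{\widetilde{\pi}}(t) \ge X_i^{\pi}(t)$ for all $t \ge \tau$ and all $r_i \in {R_{b-2}^*}^{(\pi)}(\tau)$, i.e.\ $T_i^{\widetilde{\pi}} \le T_i^{\pi}$ for these receivers. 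A final application of Lemma~\ref{lemmaA}, now on the $\widetilde{\pi}$ side, shows these receivers remain the slowest under $\widetilde{\pi}$ as well, whence $T^{\widetilde{\pi}} = \max_i T_i^{\widetilde{\pi}} \le \max_i T_i^{\pi} = T^{\pi}$. The step I expect to be the genuine obstacle is this domination argument for the receivers that sit in the \emph{intermediate} batch $b-1$ --- not merely for the eventual bottleneck receivers and the finished ones --- since state $S_{b-2}$ lets the population be spread over ${R_{b-2}^*}$, $R_{b-1}$, $R_b$ and $R_F$ at once; it is exactly there that one must lean on Lemma~\ref{lemmaA}, on the monotonicity established in section~\ref{t_b-2ON}, and, as is available wherever this lemma is invoked, on the fact that $\widetilde{\pi}$ and $\pi$ agree up to $t_{b-2}$ (cf.\ eq.~\ref{11}--\ref{14}), rather than on a one-line count comparison. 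The remaining bookkeeping is a routine re-indexing of the $b-1$ argument, and the full details are deferred to the appendix.
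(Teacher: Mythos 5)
Your proposal is correct and follows essentially the same route as the paper, which itself only states that the proof is "inline with the arguments of section \ref{t_b-2ON} and Lemma \ref{lemma2}": the intended argument is precisely your transcription one batch earlier --- the two-case split on whether ${R_{b-2}^*}^{(\widetilde{\pi})}({t_{b-2}^{ON}}^{(\pi)})$ is empty, the sample-path domination of the bottleneck receivers propagated via the eq.~\ref{6.1}/eq.~\ref{9}-style argument, and Lemma \ref{lemmaA} to certify that those receivers remain the slowest. Your observation that the receivers sitting in the intermediate batch $b-1$ are the delicate point is accurate; the paper handles this (somewhat tersely) by cascading the comparison down to the $t_{b-1}$ milestone and then invoking Lemma \ref{lemma2}, which is equivalent to the direct propagation you describe.
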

\begin{proof}[Proof of Lemma \ref{lemma3} :] $ $\newline
The proof of this Lemma is inline with the arguments of section \ref{t_b-2ON} and lemma \ref{lemma2}.

\end{proof}

Therefore, by using eq. \ref{13} and \ref{14} and lemma \ref{lemma3} we can show that, for case 1, $T^{\pi^{(4)}} \leq T^{\pi^{(3)}}$.\\

\noindent \textit{Case 2\footnote{In this case, all of the receivers of the set $R^*_{b-2}(t')$ are OFF}) $D^{LR}(t') = b-1$}.

By the definition of $t'$, we know that $D^{\pi}(t') = b$. 
Let $t''$ be the first time slot after $t'$ where $D^{LR}(t'') \neq D^{\pi}(t'')$ \textit{and} $D^{LR}(t'') = b-2$ ($t''$ might not exist). Regardless of when (and if) $t''$ occurs, we know that the receivers of the set ${R_{b-2}^*}(t)$ (for $t_{b-2} < t \leq t'' $) will not be affected by $t'$. For each of the following cases we can use the analysis of previous sections with straightforward modifications. Therefore, in order to save space, we outline the proof by referring the reader to the corresponding prior section.
\begin{itemize}
\item \textit{Subcase 2.1) $t'' < \min\{{t_{b-2}^{ON}}^{(LR)},{t_{b-2}^{ON}}^{(\pi)}\}$}

Section \ref{t_b-2} - Case 1

\item \textit {Subcase 2.2) $t'' = {t_{b-2}^{ON}}$ or $t''$ does not exist}

In both of these cases, we can see that 
\begin{center}
${t_{b-2}^{ON}}^{(LR)} = {t_{b-2}^{ON}}^{(\pi)}$ and \\${R_{b-2}^{*}}^{(LR)}({t_{b-2}^{ON}}^{(\pi)}) = {R_{b-2}^*}^{(\pi)}({t_{b-2}^{ON}}^{(\pi)})$
\end{center}
If $t'' = {t_{b-2}^{ON}}$, then the analysis is analogous to section \ref{t_b-2ON}-1. If $t''$ does not exist, the analysis is similar to section \ref{t_b-2ON}-2.
\end{itemize}
Given the previous results of section \ref{t_b-2}, the file transfer completion time of any policy $\pi$ can be reduced when $\pi$ is modified to agree with $LR$ from $t_{b-2}$ and onwards. Hence, \\
\centerline{$T^{\pi^{(4)}} \leq T^{\pi^{(3)}}$.}

\vspace{1 em}
By using \textit{backwards induction}, the policy improvement can be performed for the rest of the milestone time slots $t_{b-3}^{ON}, t_{b-3}, ..., t_{1}$ (analogous to sections \ref{t_b-2ON} and \ref{t_b-2}). Therefore, at each step, we can construct a policy $\pi^{(n)}$ (that agrees with $LR$ from $t^{(n)}$ and onwards) such that 
\begin{center}
$T^{\pi^{(n)}} \leq T^{\pi^{(n-1)}}$, $n=1, ..., 2b$\\ where
$\pi^{(0)}$ is the initial policy and $\{t_{b-1}^{ON}, t_{b-1}, ..., t_{b-(b-1)}\} = \{t^{(1)}, t^{(2)}, ..., t^{(2b)}\}$
\end{center}

Moreover, inductively, we can generalize the Lemmas \ref{lemma1}, \ref{lemma2}, \ref{lemma3} in the following form:

\begin{lemma}
Let $\{t^{(1)}, t^{(2)}, ..., t^{(2b)}\} = \{t_{b-1}^{ON}, t_{b-1}, ..., t_{1}\}$.

Let $\widetilde{\pi}$ and $\pi$ be two policies that make the delay optimal decisions starting from ${t^{(n)}}^{(\widetilde{\pi})}$ and ${t^{(n)}}^{({\pi})}$, respectively. If both of the statements below are true, then $T^{\widetilde{\pi}} \leq T^{\pi}$.
\begin{enumerate}
\item ${t^{(n)}}^{(\widetilde{\pi})} \leq {t^{(n)}}^{(\pi)}$ and
\item ${R_{b-2}^{*}}^{(\widetilde{\pi})}({t^{(n)}}^{(\pi)}) \subseteq {R_{b-2}^*}^{(\pi)}({t^{(n)}}^{(\pi)})$
\end{enumerate}
\label{lemma5}
\end{lemma}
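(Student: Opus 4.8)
The plan is to prove Lemma \ref{lemma5} by backwards induction on $n$, reusing the machinery already developed for the first few milestones. The base case $n=1$ (the milestone $t^{(1)}=t_{b-1}^{ON}$) is exactly Lemma \ref{lemma1}, proved in Appendix B, and the cases $n=2,3$ are Lemmas \ref{lemma2} and \ref{lemma3}; so the content of Lemma \ref{lemma5} is really the observation that the same argument goes through verbatim with the batch index $b-1$ or $b-2$ replaced by whatever batch $\beta$ the bottleneck receivers occupy at the generic milestone $t^{(n)}$. For readability I would restate the lemma with $R^*_{\beta}$ in place of $R^*_{b-2}$, where $\beta$ is the bottleneck batch at $t^{(n)}$.

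First I fix $n\ge 2$ and assume Lemma \ref{lemma5} for $n-1$, i.e.\ for the strictly later milestone $t^{(n-1)}$ (recall $t^{(n)}\le t^{(n-1)}$, since the milestones are visited in reverse chronological order). Because $\widetilde{\pi}$ and $\pi$ make delay-optimal decisions from their respective $t^{(n)}$'s onwards, they a fortiori make delay-optimal decisions from $t^{(n-1)}$ onwards, so the induction hypothesis is applicable once I verify its two premises at $t^{(n-1)}$. Hence the entire proof reduces to propagating the two hypotheses from $t^{(n)}$ to $t^{(n-1)}$: showing ${t^{(n-1)}}^{(\widetilde{\pi})}\le {t^{(n-1)}}^{(\pi)}$ and ${R^*_{\beta'}}^{(\widetilde{\pi})}({t^{(n-1)}}^{(\pi)})\subseteq {R^*_{\beta'}}^{(\pi)}({t^{(n-1)}}^{(\pi)})$, where $\beta'$ is the bottleneck batch at $t^{(n-1)}$; the induction hypothesis then gives $T^{\widetilde{\pi}}\le T^{\pi}$ immediately.

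Second, to propagate these two facts I would run the sample-path coupling argument of Sections \ref{t_b-2ON} and \ref{t_b-2}. Couple the connectivity vectors of the two systems and let $t'$ be the first slot after ${t^{(n)}}^{(\widetilde{\pi})}$ at which the policies disagree. By the results of Sections \ref{t_b-1ON}--\ref{t_b-2}, a delay-optimal policy at a milestone must transmit the lowest useful batch, i.e.\ it behaves like $LR$, so any such disagreement can only push $\widetilde{\pi}$ further ahead: for every bottleneck receiver $r_i$ still short of batch $\beta$ one gets $X_i^{(\widetilde{\pi})}(t'+1)\ge X_i^{(\pi)}(t'+1)$, and this inequality persists at all later slots by the standard ``$LR$ keeps serving the same receivers every time they are ON'' reasoning --- precisely the derivation of eq.\ \ref{6} and eq.\ \ref{11}. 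Combining this with Lemma \ref{lemmaA} (applied, as in Section \ref{t_b-2ON}, to keep the minimum received-count ordered across the relevant disjoint receiver subsets) and with the fact that no receiver can re-enter the bottleneck set once it has left it yields exactly the monotone advancement of $t^{(n-1)}$ and the inclusion of its bottleneck set.

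Third, I would organize the argument into the same short case split that appears in the text --- whether $t^{(n)}$ is a conflict slot, whether $D^{\widetilde{\pi}}$ and $D^{\pi}$ already coincide there, and whether the first later disagreement $t'$ concerns batch $\beta$ or a higher one --- and observe that each case is a notational copy of Section \ref{t_b-2ON} (disagreement at the ON-milestone) or Section \ref{t_b-2} (disagreement later, including the subcase where the ``next disagreement on batch $\beta$'' never occurs, and the boundary situations where the milestone coincides for both policies). The main obstacle, and the step I would treat most carefully, is verifying that this case analysis is genuinely exhaustive and uniform across all milestones $t^{(n)}$, and that the hypothesis of Lemma \ref{lemmaA} --- the ordering $X_i^{\pi}\le X_j^{\pi}$ on two disjoint subsets of the current bottleneck set --- is re-established at every milestone transition before the lemma is invoked again. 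Everything past that is a routine transcription of the already-written $b-1$ and $b-2$ arguments, so no new computation is needed.
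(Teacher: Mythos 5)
Your proposal matches the paper's approach: the paper offers no separate proof of this lemma, merely asserting that it follows ``inductively'' by generalizing Lemmas \ref{lemma1}--\ref{lemma3}, and your backwards induction --- base case Lemma \ref{lemma1}, inductive step propagating the two hypotheses from $t^{(n)}$ to $t^{(n-1)}$ via the coupling arguments of the milestone sections and Lemma \ref{lemmaA} --- is exactly the intended argument. Your observation that the statement should read $R^*_{\beta}$ for the bottleneck batch at $t^{(n)}$ rather than $R^*_{b-2}$ correctly identifies a typo carried over from Lemma \ref{lemma3}.
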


The proof of theorem \ref{thm} is concluded by examining the milestone time slot $t^{(0)}$. We define $t^{(0)}$ as the first time slot where at least one of the receivers receives the first $K$ packets. It can be shown that if $\pi^{(2b+1)}$ agrees with $LR$ from $t^{(0)}$ and onwards and $\pi^{(2b)}$ agrees with $LR$ from $t^{(1)} = t_1$ and onwards then $T^{\pi^{(2b+1)}} \leq T^{\pi^{(2b)}}$.

At $t=0$, all of the receivers will have empty queues. In the interval $[t=0,t^{(0)}]$ all of the policies will make the same decision since all of the receivers are expecting packets of the first batch. At and after $t^{(0)}$, the $LR$ policy ($\pi^{(2b+1)}$) will send packets to the connected receiver with the least number of received packets \textit{at all times} and $\pi$ ($\pi^{(2b)}$) transmits an encoded packet of any of the available batches. From the definition of the policies, we can see that : 
\begin{equation*}
X_i^{(LR)}(t^{(0)}+1) \geq X_i^{(\pi)}(t^{(0)}+1), \hspace{1.5 em} i : r_i \in {R_1}(t^{(0)})   
\end{equation*}
As before, the same equation will apply for any $t>t^{(0)}$ for the receivers that are in the set $R_{1}(t)$ under both policies :
\begin{equation*}
X_{i}^{(LR)}(t+1) \geq X_{i}^{(\pi)}(t+1), t \geq t^{(0)}, r_{i} \in {R_1}^{(LR)}(t) \cap {R_1}^{(\pi)}(t)
\end{equation*}
Any $r_i$ that belongs to ${R_1}^{(\pi)}(t)$ and does not belong to ${R_1}^{(LR)}(t)$ satisfies the equation above. Thus : 
\begin{equation}
X_{i}^{(LR)}(t+1) \geq X_{i}^{(\pi)}(t+1), t \geq t^{(0)}, r_{i} \in {R_1}^{(\pi)}(t)
\label{15}
\end{equation}
Therefore it is easy to see that : 
\begin{equation}
t_1^{(LR)} \leq t_1^{(\pi)}
\label{16}
\end{equation}
and 
\begin{equation}
{R_1}^{(LR)}(t_1^{(\pi)}) \subseteq {R_1}^{(\pi)}(t_1^{(\pi)})
\label{17}
\end{equation}
The proof of theorem \ref{thm} is concluded by using lemma \ref{lemma5} for $n=2b$.
%
\section{Expected File Transfer Completion Time}\label{approx}
In \cite{skevakis2016decoding} we provided a closed form approximation for the expected file transfer completion time under the policy $LR$. Furthermore, we derived a formula for the minimum coding window size so that the expected file transfer completion time, under $LR$, is upper bounded by a user defined delay constraint. We will briefly summarize our main results of \cite{skevakis2016decoding} and we will then provide new extensions and further approximations.

Let $X$ be a Gaussian random variable with mean $\mu = \frac{K}{p}$ and standard deviation $\sigma = \sqrt{\frac{K(1-p)}{p^2}}$, where $K$ is the coding window size and $p$ is the probability that a receiver is connected to the base station. Then, $X$ will accurately represent the file transfer completion time\footnote{for large $K$ (as $K$ increases the accuracy of our approximation increases) and moderate $p$ (in our experiments in section \ref{exps} we consider $p \in [0.1:0.9]$).} of a \textit{single} receiver (\cite{skevakis2016decoding}). The expected file transfer completion time, 
when the coding window is the entire file ($K=F$), will be :
\begin{equation}
\mathbb{E}[T_K] = \mathbb{E}[\max\limits_{1 \leq i \leq N}X_i] \approx \int_{0}^\infty (1 - (F_X(z))^N) dz,
\label{ETKK}
\end{equation}
where $F_X(z)$ represents the cdf of the Gaussian random variable $X$ and $N$ is the total number of receivers.
We showed (we refer the reader to \cite{skevakis2016decoding} for more details) that the completion time of a file of $F$ packets when using a coding window size of $K$ ($\mathbb{E}[T^F_K]$), under the $LR$ policy, can be approximated by:
\begin{equation}
\mathbb{E}[T^F_K] \approx b*\mathbb{E}[T_K] 
\label{ETFK}
\end{equation}
where $b = \frac{F}{K}$ is the total number of batches. Based on the \textit{$3$-sigma rule} of the Gaussian distribution, we showed that 
\begin{equation}
\mathbb{E}[T_K^F] \approx  b\mu + b\widetilde{n}\sigma - b\sigma A(N)
\label{ETFK1}
\end{equation}
where $\widetilde{n} \triangleq \widetilde{n}(N) = \min\{n: (erf^N(\frac{n}{\sqrt{2}}))\geq 0.99\}$, $K > \widetilde{n}^2(1-p)$ and $A(N) = \int_{- \widetilde{n}}^{\widetilde{n}} (\int_{- \widetilde{n}}^z \frac{1}{\sqrt{2\pi}}e^{-\frac{t^2}{2}} dt)^N dz$.\footnote{The constraint $K > \widetilde{n}^2(1-p)$ is due to the assumption that $\mu - \widetilde{n}\sigma > 0$. For representative values of $\widetilde{n}, K$ we refer the reader to \cite{skevakis2016decoding}.}

From our experiments, we observed that \textit{near optimal} completion time can be achieved with a coding window size $K \ll F$. Therefore, we derived a formula to determine the minimum achievable coding window size that provides an acceptable increase in the completion time (w.r.t. the optimal delay for $K=F$ \cite{joshi2013round}). We showed that, given a user defined delay constraint $\epsilon$ (expressed as a percentage of increase of the minimum achievable completion time $\mathbb{E}[T_{opt}] = \mathbb{E}[T^F_F]$) the minimum coding window size will satisfy the following formula :
\begin{equation}
\frac{\mathbb{E}[T_K^F] - \mathbb{E}[T_{opt}]}{\mathbb{E}[T_{opt}]} = \frac{\sqrt{1 - p}(\widetilde{n} - A(N))}{\sqrt{F} + \sqrt{1 - p}(\widetilde{n} - A(N))}(\sqrt{\frac{F}{K}} - 1) \leq \epsilon
\label{epsilon}
\end{equation}

The accuracy of eq. \ref{ETFK} and \ref{epsilon} is satisfactory, as we showed in \cite{skevakis2016decoding}. However, motivated by the fact that the computation of $\widetilde{n}$ and $A(N)$ is not straightforward, we now provide computational enhancements for $A(N)$ and eq. \ref{ETFK1} and \ref{epsilon}.

\subsection{Approximation for $A(N)$}
 
By the definition of  $\widetilde{n}$ we know that ${erf}^N(\frac{\widetilde{n}}{\sqrt{2}}) \geq 0.99$. Thus, $\frac{1}{\sqrt{2\pi}}\int_{-\infty}^{-\widetilde{n}}e^{-x^2/2}dx \approx 0 $.
Therefore, 
\begin{center}
$A(N) = \int_{- \widetilde{n}}^{\widetilde{n}} (\int_{- \widetilde{n}}^z \frac{1}{\sqrt{2\pi}}e^{-\frac{t^2}{2}} dt)^N dz = $ \\$= \int_{- \widetilde{n}}^{\widetilde{n}} (\int_{-\infty}^z \frac{1}{\sqrt{2\pi}}e^{-\frac{t^2}{2}} dt - \int_{-\infty}^{\widetilde{n}} \frac{1}{\sqrt{2\pi}}e^{-\frac{t^2}{2}} dt)^N dz \approx $\\$\int_{- \widetilde{n}}^{\widetilde{n}} (\int_{-\infty}^z \frac{1}{\sqrt{2\pi}}e^{-\frac{t^2}{2}} dt)^N = \int_{- \widetilde{n}}^{\widetilde{n}} \Phi(x)^N dx$,
\end{center}
where $\Phi(x)$ is the cdf of the standard normal $\mathcal{N}(0,1)$. Moreover, we know that, for the Q-function, $\Phi(x) = 1 - Q(x)$ and $Q(-x) = 1 - Q(x)$. Thus,
\begin{center}
$A(N) = \int_{- \widetilde{n}}^{\widetilde{n}} \Phi(x)^N dx = \int_{0}^{\widetilde{n}} (Q(x)^N + \Phi(x)^N) dx$
\end{center}
The Q-function is monotonically decreasing and as $N$ increases, $Q(x)^N$ rapidly decreases. Therefore, we can assume that $\int_{0}^{\widetilde{n}} Q(x)^Ndx \approx 0$, with improved accuracy as $N$ increases\footnote{For $\widetilde{n} = 5, \int_{0}^{\widetilde{n}} Q(x)^Ndx = 0.1168, 0.0164, 0.0029$ for $N = 2, 4, 6$, respectively.}. Hence, 
\begin{equation}
A(N) \approx \int_{0}^{\widetilde{n}} \Phi(x)^Ndx
\label{Alpha}
\end{equation}

\subsection{Approximations for $\mathbb{E}[T_K^F]$ and $K$ in eq. \ref{ETFK1} and \ref{epsilon}}

As we argued for eq. \ref{ETKK}, the file transfer completion time, when $K=F$, is the expected value of the maximum of $N$ Gaussian random variables (denoted with $X_i$) with mean $\mu = \frac{K}{p}$ and standard deviation $\sigma = \sqrt{\frac{K(1-p)}{p^2}}$. Therefore,
\begin{equation*}
\mathbb{E}[T_K] =  \mathbb{E}[\max\limits_{1 \leq i \leq N}X_i] = \mu + \sigma \mathbb{E}[\max\limits_{1 \leq i \leq N}Z_i]
\end{equation*}
where $Z \backsim \mathcal{N}(0,1)$. In \cite{chen1999accurate}, the authors accurately approximated the expected value of the greatest order statistics for Gaussian r.v.'s by the expression $\Phi^{-1}(0.5264^{\frac{1}{N}})$, where $\Phi^{-1}$ is the inverse of the Gaussian cdf and $N$ is the sample size. Even though $\Phi^{-1}$ has no closed form representation, many polynomial approximations exist (\cite{wichura1988}, \cite{acklam2003}) and it is also built in many commercial mathematical software packages (e.g. MATLAB, Mathematica). We will use the findings of \cite{chen1999accurate} in order to derive simpler formulas for eq. \ref{ETFK1} and \ref{epsilon}. Based on the above, eq. \ref{ETKK} can be rewritten as:
\begin{equation*}
\mathbb{E}[T_K] \approx \mu + \sigma*\Phi^{-1}(0.5264^{1/N})
\end{equation*}
and by using eq. \ref{ETFK}
\begin{equation}
\mathbb{E}[T^F_K] \approx b\mu + b\sigma*\Phi^{-1}(0.5264^{1/N})
\label{ETFKappr}
\end{equation}
Similarly with the derivation of the above equations,
\begin{center}
$\mathbb{E}[T_{opt}] = \mathbb{E}[T^F_F] = \mu_F + \sigma_F*\Phi^{-1}(0.5264^{1/N}) $,
\end{center}
where $\mu_F = F/p$ and $\sigma_F = \sqrt{\frac{F(1-p)}{p^2}}$. Thus, eq. \ref{epsilon} will be transformed to
\begin{equation}
\frac{(\sqrt{\frac{F}{K}}-1)(\sqrt{1-p})B(N)}{\sqrt{F} + (\sqrt{1-p})B(N)} \leq \epsilon,
\label{eBNF}
\end{equation}
where $B(N) = \Phi^{-1}(0.5264^{1/N})$. Furthermore, in the denominator, the term $\sqrt{F}$  dominates the term $\sqrt(1-p)B(N)$. Therefore, we can further simplify eq. \ref{eBNF} as 
\begin{equation}
\frac{(\sqrt{\frac{F}{K}}-1)(\sqrt{1-p})B(N)}{\sqrt{F}} \leq \epsilon,
\label{epsilonappr}
\end{equation}
As we can notice, eq. \ref{ETFKappr} and \ref{epsilonappr} are not only simpler than \ref{ETFK1} and \ref{epsilon}, respectively, but they also do not contain the variable $\widetilde{n}$ and are thus applicable for all $K$'s. Therefore, given $F$, $N$ and $\epsilon$ the minimum value of $K$ can be readily computed.

\section{Experimental Results}\label{exps}
We performed our experiments\footnote{The experiments were performed with a custom built simulator using Java.} under various simulation conditions. Under such conditions we included low, medium and high system load (for $N \leq 10, 50$ and $100$, respectively), different file sizes ($400 \leq F \leq 10K$) and receiver connectivity probabilities in the range $0.1 \leq p \leq 0.9$. Each experiment was repeated multiple times and the averages were calculated for each value of $K$ adjusted so that $F/K$ is integer valued. In the limited available space we will try to give a comprehensive outlook of our simulation based evaluations.
\subsection{Comparison of LR with other policies \protect\footnote{$95\%$ confidence intervals were calculated but since they turned out to be very narrow are omitted.}}

We compared the performance of the $LR$ policy with that of 2 other policies 
A description of those policies follows:

\textit{Random Selection (RS): }This policy is based on randomly selecting a batch. Each batch $i$ is selected with probability $\frac{Ni}{Nc}$, where $Ni$ is the number of connected receivers with batch ID $i$ and $Nc$ is the total number of connected receivers.
%

\textit{Maximum Gain (MG): }This is a "greedy" policy. MG maximizes the \textit{instantaneous} throughout by selecting, at each time slot, the batch that will be beneficial to the largest number of the connected receivers. 

The upper part of figures \ref{fig1} and \ref{fig2} show the file transfer completion time (the \textit{maximum} completion time among the receivers, averaged out of 200 independent replications), normalized by $F/p$ (the average completion time of a receiver in the ideal case, where $K=F$). We restrict the coding window to values less than $F/3$ since the completion time is almost the same for the rest of the values, under all policies. This is expected, since as we increase $K$ the number of conflict slots decreases. Thus, the policies act on less time slots and the effect, of each policy, decreases. We can see that the $LR$ policy largely outperforms the other 2 policies, especially when the coding window size is small. The LR policy achieves $82\% - 91\%$ lower file transfer completion time than the $RS$ and the $MG$, for $K < 180$ $(6\%$ of $F$, figure \ref{fig1}). As the system load and the file size is increased, those percentages increase to $91\% - 97\%$ for the $RS$ and $94\% - 97\%$ for the $MG$ policy, for $K < 600$ $(6\%$ of $F$, figure \ref{fig2}). This behaviour is verified by the rest of our experiments; the improvement in the file transfer completion time, under the $LR$ policy, increases as we increase the values of $N$, $F$ or $p$.

The middle part of figures \ref{fig1} and \ref{fig2} show the average completion time of a receiver (the \textit{average} completion time among the receivers, averaged out of 200 independent replications), normalized by $F/p$. It is interesting to notice that the file transfer completion time, under the $LR$, is almost the same as the average completion time of a receiver. The $LR$ policy, by favouring the receivers with the least number of received packets, manages to reduce the differences in the completion time of the receivers. As a result, the maximum and the average completion time of the receivers is almost the same. The other 2 policies do not exhibit this behaviour; the average completion time is significantly less than the file transfer completion time (maximum completion time). Thus, there is a wide spread in the completion time among the receivers which is verified by the lower parts of figures \ref{fig1} and \ref{fig2}. The lower parts of these figures show the variance of the normalized completion time among the receivers, averaged out of 200 replication. Therefore, the $LR$ policy also enhances the "fairness" of the system; all of the receivers receive the entire file in almost the same time.

\begin{figure}
\centering
\includegraphics[scale = 0.5, trim = 1.5cm 0cm 0cm 0cm, clip]{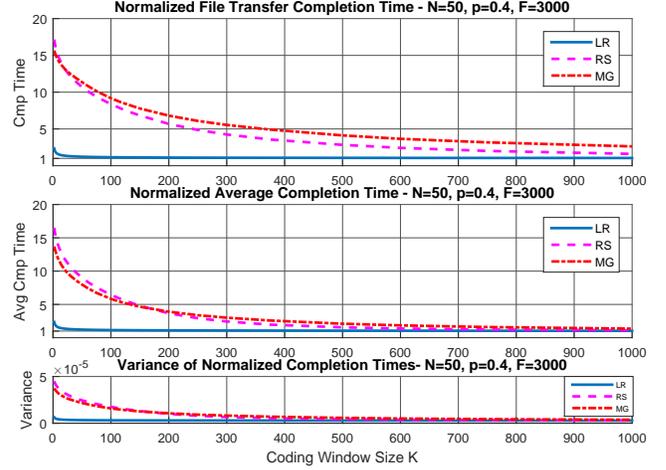}
\caption{File transfer completion time and average completion time of a receiver, under 3 policies.}
\label{fig1}
\end{figure}

\begin{figure}
\centering
\includegraphics[scale = 0.5, trim = 1.5cm 0cm 0cm 0cm, clip]{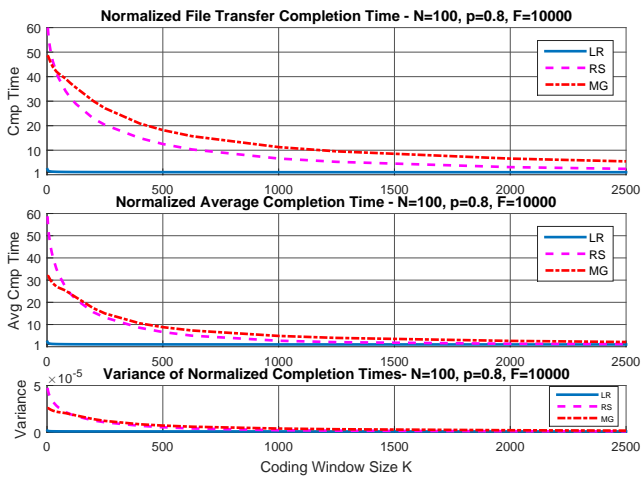}
\caption{File transfer completion time and average completion time of a receiver, under 3 policies.}
\label{fig2}
\end{figure}

Figures \ref{thr1} and \ref{thr2} depict the normalized throughput of the 3 policies. We define the throughput of the system as the average number of received packets per time slot (a transmission of a single packet can result in a maximum of $N$ received packets; all receivers are ON and have the same batch ID). In the ideal case, when $K=F$, no conflict slots will occur and the average throughput will be $Np$. As soon as (at least) one receiver receives the entire file, this value will decrease. In order to perform a fair comparison of the 3 policies, we transmitted a large file ($>10K$ packets) and calculated the long term throughput until the minimum completion time among the receivers. We normalized the calculated throughput by the throughput of the ideal case, $Np$. As we can see from figures \ref{thr1} and \ref{thr2}, the $LR$ policy exploits the broadcast nature of the channel more efficient than the other 2 policies. A normalized throughput of $0.9$ is reached with a coding window size of $200$ (figure \ref{thr1}) and $125$ (figure \ref{thr2}) when $N = 50$, $p = 0.4$ and $N = 100$, $p = 0.8$, respectively. This is of major importance since we can see that the $LR$ policy can achieve a throughput close to the optimal one with relatively small values of $K$. The other 2 policies need a coding window size of approximately 10 times larger in order to achieve a normalized throughput close to $0.75$. The $LR$ policy, by favouring the receivers with the least number of received packets, manages to balance the queues of the receivers. Therefore, on the long run, each transmitted packet is beneficial to more receivers ($0.9$ throughput means that, on average, $90\%$ of the receivers have the same (smallest) batch ID). These results reveal that the $LR$ policy may also be throughput optimal, in the case of transmitting a file of infinite size or a stream of packets.

\begin{figure}
\centering
\includegraphics[scale = 0.5, trim = 0.5cm 0cm 0cm 0cm, clip]{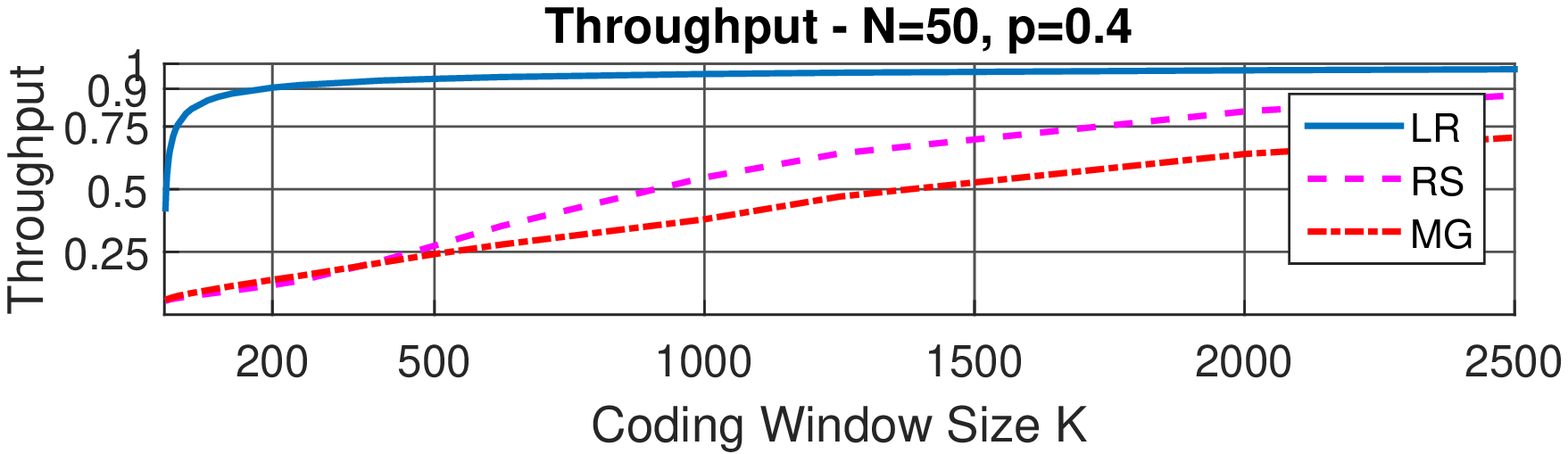}
\caption{Normalized throughput comparison.}
\label{thr1}
\end{figure}

\begin{figure}
\centering
\includegraphics[scale = 0.5, trim = 0.5cm 0cm 0cm 0cm, clip]{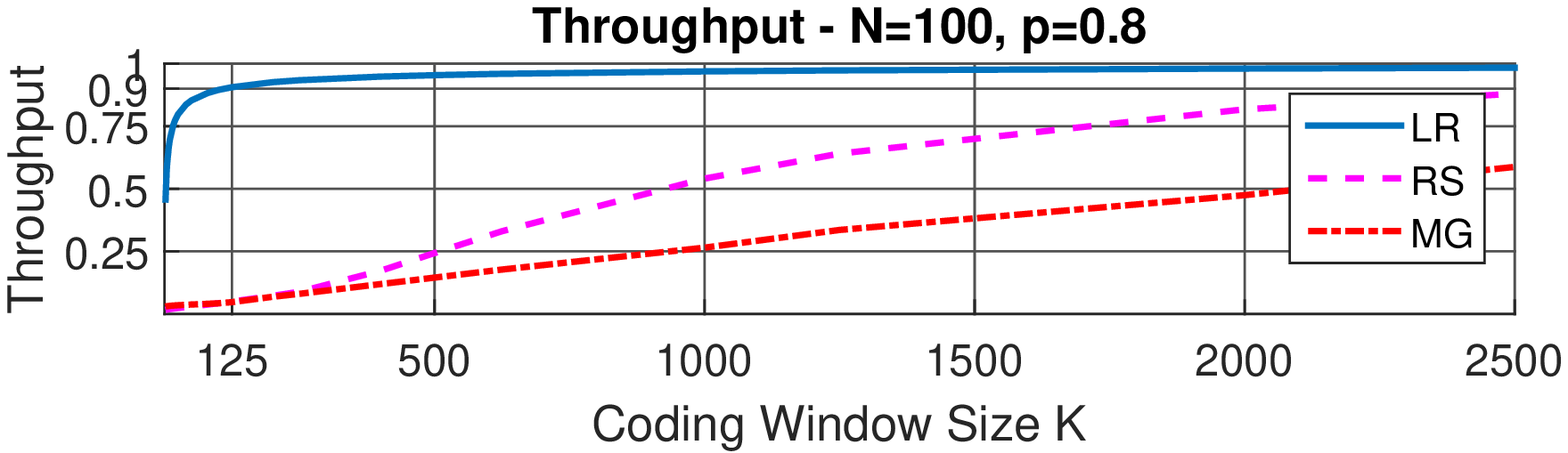}
\caption{Normalized throughput comparison.}
\label{thr2}
\end{figure}
 
\subsection{Accuracy of Approximations}
In this section we will evaluate the accuracy of the approximations that are presented in section \ref{approx}. We compared eq. \ref{Alpha} with the actual value of $A(N)$ (eq. \ref{ETFK1}). Since $A(N)$ solely depends on the number of receivers, we compared those values for a wide range of receivers; from $2$ up to $5000$, by calculating the percent of the absolute difference. We concluded that eq. \ref{Alpha} represents the value $A(N)$ reasonably accurate since the average percent difference was found to be $0.081\%$ with a maximum value of $4\%$.

In \cite{skevakis2016decoding}, we established the accuracy of eq. \ref{ETFK1} with respect to both the actual and the experimental file transfer completion time. Figure \ref{app} compares the file transfer completion time (normalized by $F/p$) based on eq. \ref{ETFK1} and \ref{ETFKappr}\footnote{The values of $K$ start from 8 as a result of the constraint $K > \widetilde{n}^2(1-p)$ for eq. \ref{ETFK1}.}. As we can see, our approximation accurately represent the completion time. The mean and maximum errors of our approximation were calculated to be around $0.13\%$ and $0.26\%$, respectively (up to $0.22\%$ and $0.33\%$ from the entirety of our experiments).

\begin{figure}
\centering
\includegraphics[scale = 0.5, trim = 1cm 0cm 0cm 0cm, clip]{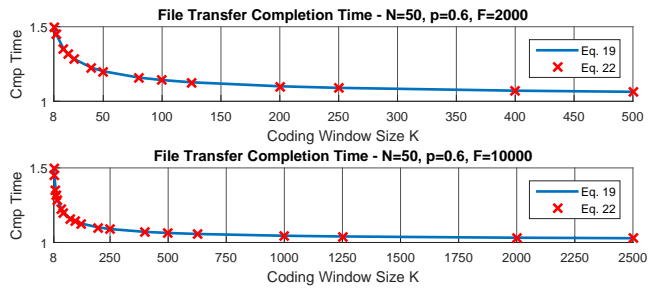}
\caption{Accuracy of eq. \ref{ETFKappr}}
\label{app}
\end{figure}

In table \ref{table1}, we show the minimum coding window size that achieves a file transfer completion time of at most $\epsilon$ times more than the optimal (when $K=F$) for different values of the file size $F$. The minimum $K$ is expressed as a percentage of $F$. The first column ($LR$) is derived from the experiments and the second and third column are derived from eq. \ref{epsilon} and \ref{epsilonappr}, respectively. We removed the restriction that $F/K$ must be an integer for the last 2 columns in order to evaluate the accuracy of our approximation. As we can see, eq. \ref{epsilonappr} produces almost the same results as eq. \ref{epsilon}; the differences are most of the time less than $0.5\%$. One can notice that the theoretical minimum $K$ is the same as the experimental one in all of the cases. Furthermore, this table verifies that near-optimal file transfer completion time can be achieved with a much smaller coding window size; up to $100$ times smaller for $\epsilon=10\%$ and up to $4$ times smaller for $\epsilon=1\%$.

\begin{table}
\large
\centering
\adjustbox{max width=\columnwidth}{
\begin{tabular}{| c || ccc | ccc | ccc |}
\hline
   $\epsilon$ & \multicolumn{3}{c|}{LR} & \multicolumn{3}{c|}{Eq (\ref{epsilon})} & \multicolumn{3}{c|}{Eq. (\ref{epsilonappr})}\\ 
   \hline \hline
   & \multicolumn{3}{c|}{$F$}   & \multicolumn{3}{c|}{$F$}   & \multicolumn{3}{c|}{$F$}\\
   & $2K$ & $5K$&$10K$& $2K$ & $5K$&$10K$& $2K$ & $5K$&$10K$ \\ \hline
  $10\%$  & 4\% & 2\% & 1\%  & 3.3\% & 1.52\%  & 0.83\% & 3.35\%   &1.54\% & 0.83\%  \\ \hline
  $1\%$   & 50\% & 50\%  & 25\% & 47.3\%     & 34.12\% & 24.93\%    &47.75\% & 34.3\%    &24.98\% \\ \hline
\end{tabular}}
\caption{Percentage of minimum Coding Window Size - $p = 0.8$, $N = 50$}
\label{table1}
\end{table}


\subsection{Limited Feedback} \label{limited}

The assumption of complete feedback information is impractical in realistic systems, especially as the number of receivers increases. In this section we will present experimental results of a modification of the $LR$ policy for systems with limited feedback. We assume that each receiver transmits an ACK (which is received by the base station instantly and without errors) when the receiver receives all of the encoded packets that are needed to decode a single batch. Therefore, the base station has knowledge of the batch ID of each receiver but not of that of each \textit{connected} receiver. The modified $LR$ transmits an encoded packet of batch $i$, where $i$ is the minimum batch ID among \textit{all} of the receivers. We note here that the approximations of section \ref{approx} and the results of \cite{skevakis2016decoding} model the behaviour of the modified $LR$. However, as we showed in \cite{skevakis2016decoding} they accurately represent the behaviour of the original $LR$ policy. Figure \ref{lim} depicts the normalized file transfer completion time under the modified $LR$ policy and the "original" $LR$. As we can see, the modified $LR$ is able to achieve near optimal completion time with significantly less feedback requirements. The percent of the difference in the completion time under the two policies (in the scenario of figure \ref{lim}) was calculated to have an average value of $2.3\%$ and a maximum value of $8\%$. The difference between the policies is decreasing as we increase either $N$, $F$ or $p$.

\begin{figure}
\centering
\includegraphics[scale = 0.5, trim = 1cm 0cm 0cm 0cm, clip]{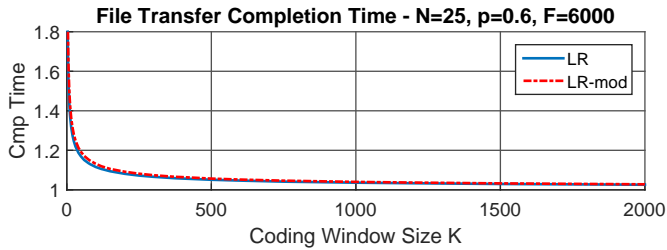}
\caption{Modified $LR$ vs $LR$}
\label{lim}
\end{figure}

\section{Conclusions}\label{conclusions}

We apply chunked RLNC in a single-hop network for broadcast communications, where a single file is transmitted to an arbitrary number of receivers through unreliable channels. In a previous work we proposed and evaluated a scheduling policy for chunked RLNC, namely the Least Received (LR). We derived a closed form formula for the expected file transfer completion time, under the $LR$ policy and one for the minimum coding window size so that the completion time is upper bounded by a user defined delay constraint.

In this work, we proved the optimality of the $LR$ policy regarding the expected file transfer completion time; i.e. there is no feasible policy can achieve lower file transfer completion time, for a given number of receivers, file size, coding window size and erasure probability of the channels. Secondly, we derived simple and accurate approximations for the formulas presented in \cite{skevakis2016decoding}. From or simulation results we also hinted that the $LR$ policy may be throughput optimal. This however is beyond the scope of this work. Finally, we proposed a modification of the $LR$ policy in the case of limited feedback from the receivers. We showed that this modification can achieve almost the same completion time as the $LR$ with substantially less feedback.

Our future research will focus on developing a policy, based on the $LR$, in the case of minimal feedback; the receivers will only acknowledge the reception of the entire file. Moreover, we will focus on expanding our system model for multicast communications.

\appendices
\section{Proof of Lemma \ref{lemmaA}}
Since lemma \ref{lemmaA} is independent of the policy, we will drop the policy superscript. Before we introduce the proof, we remind the reader that, at time $t > t_s$, we denote with $i$ ($j$) the receivers of the set $R_{i}(t_s)$ ($R_{j}(t_s)$) that remain in the set $R_{\beta(t_s)}(t)$ at $t$. 

Let $C_k$ denote the binary random variable representing the connectivity of receiver $r_k$ (1 for connected and 0 otherwise) and $C(t)$ the random vector of the connectivities of all of the receivers at $t$.
For any sample path $\omega$ for which $\min\limits_{i} X_{i}(\omega, t'+1) > \min\limits_{j} X_{j}(\omega, t'+1)$ at some $t' > t_s$, there must have been at least one time slot $t$ $(t_s < t < t')$ where $\min\limits_{i} X_{i}(\omega, t) = \min\limits_{j} X_{j}(\omega, t)$. At each such $t$, if there exists a pair ($i$,$j$) such that $r_{i} \in R_{i}(t_s) \cap R_{\beta(t_s)}(t), r_{j} \in R_{j}(t_s) \cap  R_{\beta(t_s)}(t)$ and the following conditions are met:
\begin{itemize}
\item $X_{i}(t) = \min\limits_{i} X_{i}(t), C_{i}(t) = 1$
\item $X_{j}(t) = \min\limits_{i} X_{i}(t), C_{j}(t) = 0$
\end{itemize}
we will switch to an \textit{equivalent} sample path $\omega'$ with the following property :
\begin{center}
$C'_{i}(t) = 0$ and $C'_{j}(t) = 1$
\end{center}

\noindent Note: we only need to find one pair of ($i$,$j$) in order to guarantee that $\min\limits_{i} X_{i}^{\pi}(\omega', t+1) \leq \min\limits_{j} X_{j}^{\pi}(\omega', t+1)$. For any sample path $\omega$ (as described above), at least one $t$ with the above mentioned property will surely exist.

\begin{figure}
\centering
\includegraphics[scale = 0.375]{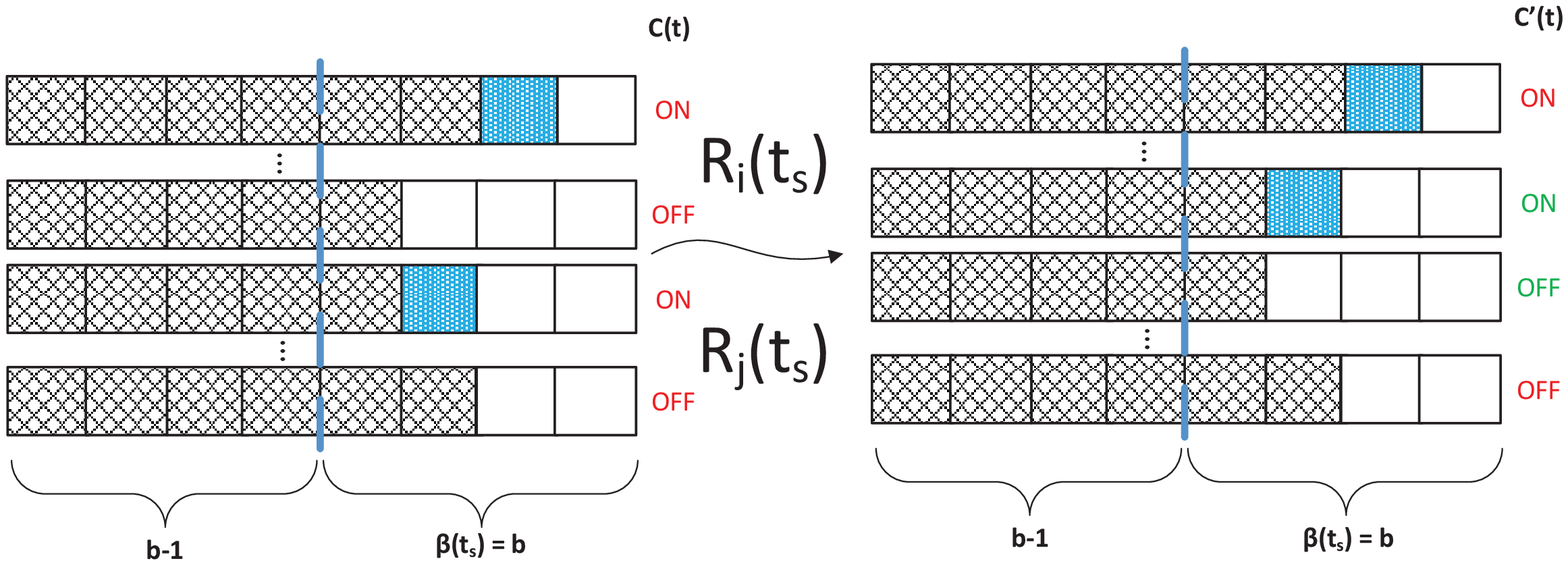}
\caption{Example of the connectivity coupling.}
\label{state111}
\end{figure}

An example of the sample path switching (in order to guarantee stochastic dominance), $\omega|C(t) \rightsquigarrow \omega'|C'(t)$, can be seen in figure \ref{state111}. Since the channels between the base station and the receivers are assumed to be identical and independent across time and receivers, the resulting r.v. $X(\omega')$ has the same probability distribution as $X(\omega)$ (since the stochastic process $\boldsymbol{C'}$ has the same distribution as $\boldsymbol{C}$). By switching to the sample path $\omega'$ at each $t$ (as described above), we can safely state that:
\begin{center}
$\min\limits_{i} X_{i}(t+1) \leq \min\limits_{j} X_{j}(t+1) \hspace{2 em} t \geq t_s,$\\$ \forall i,j : r_{i} \in R_{i}(t_s) \cap R_{\beta(t_s)}(t), r_{j} \in R_{j}(t_s) \cap R_{\beta(t_s)}(t)$ 
\end{center}


\section{Proof of Lemma \ref{lemma1}}

From the fact that ${t_{b-1}^{ON}}^{(\widetilde{\pi})} \leq {t_{b-1}^{ON}}^{(\pi)}$ and by the definition of those time slots we know that :
\begin{itemize}
\item At ${t_{b-1}^{ON}}^{(\pi)}$, under $\widetilde{\pi}$, the bottleneck receivers will be \textit{at least} at the end of batch $b-1$.
\item At ${t_{b-1}^{ON}}^{(\pi)}$, under $\pi$, the bottleneck receivers will be \textit{exactly} at the end of batch $b-1$.
\end{itemize}

Let us denote with $i^{(\widetilde{\pi})}$ the $i$'s such that $r_i \in {R_{b-1}^*}^{(\widetilde{\pi})}({t_{b-1}^{ON}}^{(\pi)})$ and with $i^{(\pi)}$ the $i$'s such that $r_i \in {R_{b-1}^*}^{(\pi)}({t_{b-1}^{ON}}^{(\pi)})$. We will distinguish two cases that the conditions of this lemma can hold depending on the cardinality of the set ${R_{b-1}^*}^{(\widetilde{\pi})}({t_{b-1}^{ON}}^{(\pi)})$ (whether its empty or not). 
\vspace{1 em}

\noindent \textit{Case 1: ${R_{b-1}^*}^{(\widetilde{\pi})}({t_{b-1}^{ON}}^{(\pi)}) \neq \emptyset$.}\footnote{This can occur only when ${t_{b-1}^{ON}}^{(\widetilde{\pi})} = {t_{b-1}^{ON}}^{(\pi)}$}

\begin{figure}[H]
\centering
\includegraphics[scale=.3]{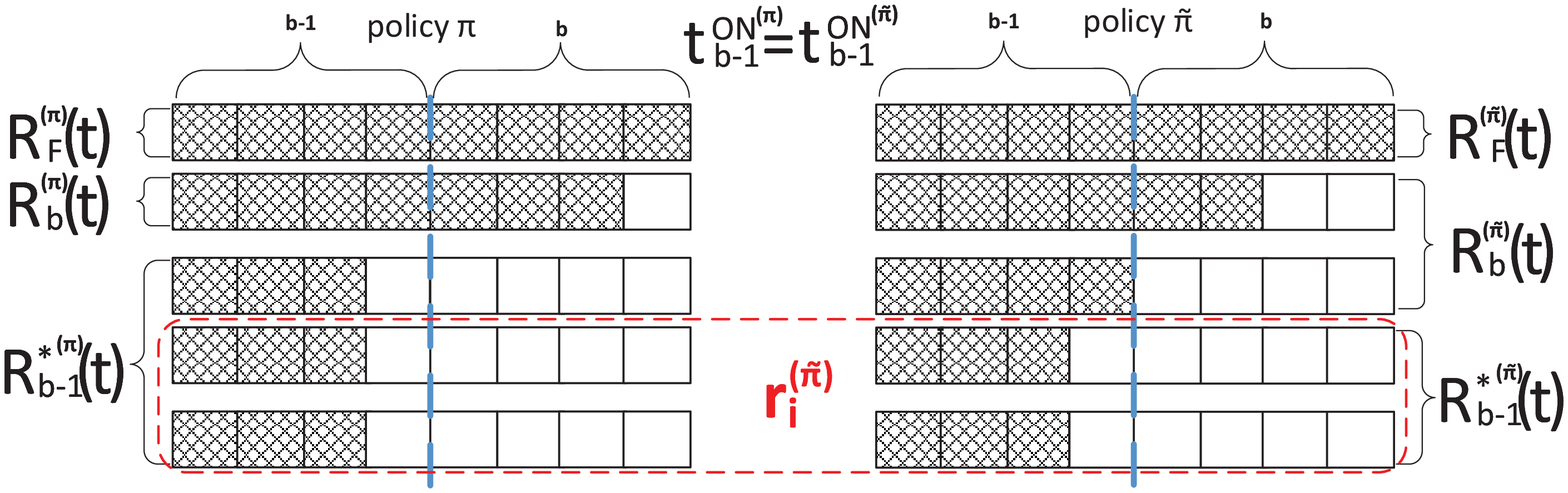}
\caption{System at ${t_{b-1}^{ON}}^{(\pi)}$ under $\pi$ and $\widetilde{\pi}$}
\label{AppA_1}
\end{figure}

Figure \ref{AppA_1} depicts the system at time ${t_{b-1}^{ON}}^{(\pi)}$ under both policies. At that time slot, both $\pi$ and $\widetilde{\pi}$ will choose batch $b-1$ for transmission. Thus,
\begin{center}
$X_{i^{(\widetilde{\pi})}}^{(\widetilde{\pi})}({t_{b-1}^{ON}}^{(\pi)} + 1) = X_{i^{(\widetilde{\pi})}}^{(\pi)}({t_{b-1}^{ON}}^{(\pi)} + 1) $\\$ \forall {i^{(\widetilde{\pi})}} : r_{i^{(\widetilde{\pi})}} \in {R_{b-1}^*}^{(\widetilde{\pi})}({t_{b-1}^{ON}}^{(\pi)})$.
\end{center}

It's easy to see that, since no conflict slots occur under any policy, the above equation will hold for the remaining time slots. Therefore, 
\begin{center}
$T_{i^{(\widetilde{\pi})}}^{(\widetilde{\pi})} = T_{i^{(\widetilde{\pi})}}^{(\pi)}$
\end{center}

By applying lemma \ref{lemmaA} twice, one for each policy, with the following parameters (where $\pi'$ refers to either $\pi$ or $\widetilde{\pi}$): 
\begin{itemize}
\item $t_s = {t_{b-1}^{ON}}^{(\pi)} + 1$.
\item $R_{\beta(t_s)}^{\pi'}(t_s) = R_b^{(\pi')}({t_{b-1}^{ON}}^{(\pi)})$
\item $R_{i}^{\pi'}(t_s) = {R_{b-1}^*}^{(\pi')}({t_{b-1}^{ON}}^{(\pi)})$. 
\item $R_{j}^{\pi'}(t_s) = {R_{b}}^{(\pi')}({t_{b-1}^{ON}}^{(\pi)})$. 
\end{itemize}

we can see that 
\begin{itemize}

\item $\max\limits_{i^{(\widetilde{\pi})}}T_{i^{(\widetilde{\pi})}}^{\widetilde{\pi}} \geq \max\limits_{j^{(\widetilde{\pi})}}T_{j^{(\widetilde{\pi})}}^{\widetilde{\pi}} \hspace{1 em} {j^{(\widetilde{\pi})}} : r_{j^{(\widetilde{\pi})}} \in {R_{b}}^{(\widetilde{\pi})}({t_{b-1}^{ON}}^{(\pi)})$.

\item $\max\limits_{i^{({\pi})}}T_{i^{({\pi})}}^{{\pi}} \geq \max\limits_{j^{({\pi})}}T_{j^{({\pi})}}^{{\pi}} \hspace{1 em} {j^{({\pi})}} : r_{j^{({\pi})}} \in {R_{b}}^{({\pi})}({t_{b-1}^{ON}}^{(\pi)})$.

\end{itemize}

If there exists a receiver $r_k$ such that $r_k \in {R_{b-1}^*}^{(\pi)}({t_{b-1}^{ON}}^{(\pi)})$ and $r_k \notin{R_{b-1}^*}^{(\widetilde{\pi})}({t_{b-1}^{ON}}^{(\pi)})$, then $X_{k}^{(\pi)}(t) < X_{k}^{(\widetilde{\pi})}(t)$ and thus $T_k^{(\pi)} > T_k^{(\widetilde{\pi})}$. Hence, $\max\limits_{i^{({\pi})}}T_{i^{({\pi})}}^{{\pi}} > \max\limits_{i^{(\widetilde{\pi})}}T_{i^{(\widetilde{\pi})}}^{\widetilde{\pi}}$.\\

\noindent If not, then  $\max\limits_{i^{({\pi})}}T_{i^{({\pi})}}^{{\pi}} = \max\limits_{i^{(\widetilde{\pi})}}T_{i^{(\widetilde{\pi})}}^{\widetilde{\pi}}$.

\noindent Therefore, we can see that :

\begin{center}
$T^{\widetilde{\pi}} = \max\{\max\limits_{i^{(\widetilde{\pi})}}T_{i^{(\widetilde{\pi})}}^{\widetilde{\pi}}, \max\limits_{j^{(\widetilde{\pi})}}T_{j^{(\widetilde{\pi})}}^{\widetilde{\pi}} \} \stackrel{}{=} \max\limits_{i^{(\widetilde{\pi})}}T_{i^{(\widetilde{\pi})}}^{\widetilde{\pi}} \stackrel{}{\leq}$ \\
$\max\limits_{i^{(\pi)}}T_{i^{(\pi)}}^{\pi} \leq \max\{\max\limits_{i^{(\pi)}}T_{i^{(\pi)}}^{\pi}, \max\limits_{j^{(\pi)}}T_{j^{(\pi)}}^{\pi} \} = T^{\pi} \Rightarrow$\\$ T^{\widetilde{\pi}} \leq T^{\pi}$

\end{center}

\noindent \textit{Case 2: ${R_{b-1}^*}^{(\widetilde{\pi})}({t_{b-1}^{ON}}^{(\pi)}) = \emptyset$.}
\begin{figure}[H]
\centering
\includegraphics[scale=.33]{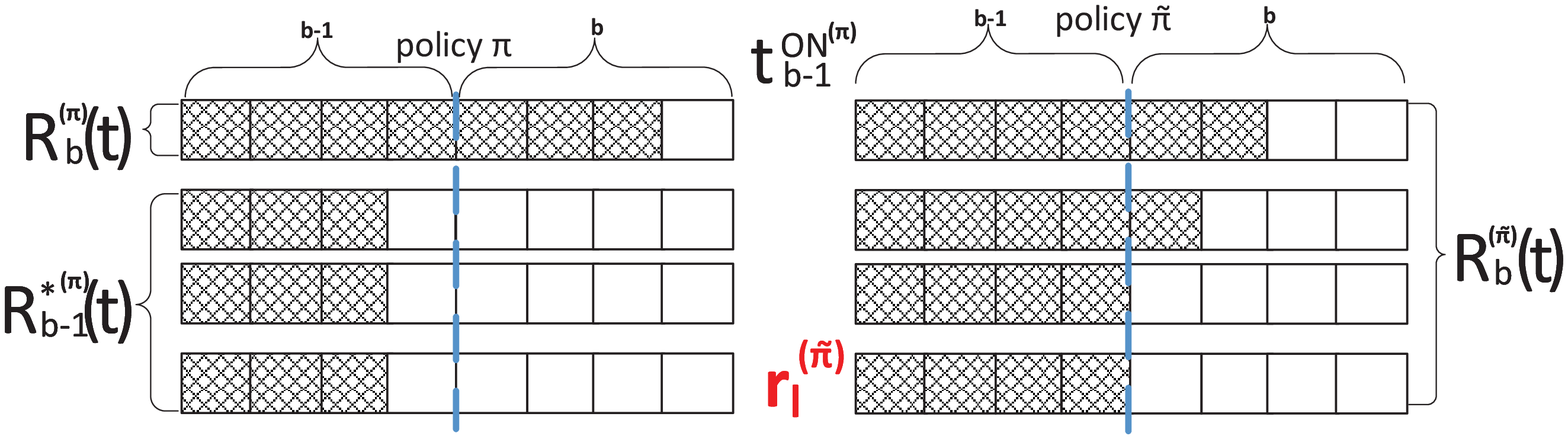}
\caption{System at ${t_{b-1}^{ON}}^{(\pi)}$ under $\pi$ and $\widetilde{\pi}$}
\label{AppA_2}
\end{figure}

In this case, under $\widetilde{\pi}$ at ${t_{b-1}^{ON}}^{(\pi)}$, no receiver is in batch $b-1$. Let $r_l^{(\widetilde{\pi})}$ denote one of the bottleneck receivers of $\widetilde{\pi}$ at that time slot. Without loss of generality, we assume that $r_l^{(\widetilde{\pi})}$ belongs in ${R_{b-1}^*}^{(\pi)}({t_{b-1}^{ON}}^{(\pi)})$.
Hence,
\begin{center}
$X_{l^{(\widetilde{\pi})}}^{(\widetilde{\pi})}({t_{b-1}^{ON}}^{(\pi)}) > X_{l^{(\widetilde{\pi})}}^{(\pi)}({t_{b-1}^{ON}}^{(\pi)})$
\end{center} 
Since, under $\widetilde{\pi}$, no more conflict slots will occur,
\begin{center}
$X_{l^{(\widetilde{\pi})}}^{(\widetilde{\pi})}(t) > X_{l^{(\widetilde{\pi})}}^{(\pi)}(t), \hspace{1.5 em} {t_{b-1}^{ON}}^{(\pi)} \leq t \leq \min\{T_{l^{(\widetilde{\pi})}}^{\widetilde{\pi}}, T_{l^{(\widetilde{\pi})}}^{\pi}\}$
\end{center} 
%
Hence, $T_{l^{(\widetilde{\pi})}}^{\widetilde{\pi}} < T_{l^{(\widetilde{\pi})}}^{\pi}$. 

By using lemma \ref{lemmaA} for $t_s = {t_{b-1}^{ON}}^{(\pi)} + 1$ and for  $R_{i}^{\pi}(t_s) = \{r_{l}^{(\widetilde{\pi})}\}$ and $R_{j}^{\pi}(t_s) = R_b^{(\widetilde{\pi})}({t_{b-1}^{ON}}^{(\pi)} + 1)\setminus \{r_{l}^{(\widetilde{\pi})}\}$, we can show that $T_{l^{(\widetilde{\pi})}}^{\widetilde{\pi}} \geq T_{k}^{\widetilde{\pi}}$, where ${r_k}$ is any receiver except $r_{l^{(\widetilde{\pi})}}$.
Hence, 
\begin{center}
$T^{\widetilde{\pi}} = \max\{T_{l^{(\widetilde{\pi})}}^{\widetilde{\pi}}, \max\limits_{k}T_{k}^{\widetilde{\pi}} \} = T_{l^{(\widetilde{\pi})}}^{\widetilde{\pi}}$ and \\
$T^{{\pi}} = \max\{T_{l^{(\widetilde{\pi})}}^{{\pi}}, \max\limits_{k}T_{k}^{{\pi}} \} \geq T_{l^{(\widetilde{\pi})}}^{{\pi}} >  T_{l^{(\widetilde{\pi})}}^{\widetilde{\pi}}$
\end{center}
Therefore,
\begin{center}
$T^{\widetilde{\pi}} < T^{\pi}$.
\end{center}

\section{Proof of Lemma \ref{lemma2}}

\noindent From the fact that ${t_{b-1}}^{(\widetilde{\pi})} \leq {t_{b-1}}^{(\pi)}$ and by the definition of the time slot ${t_{b-1}}^{(\pi)}$ we know that :
\begin{itemize}
\item At ${t_{b-1}}^{(\pi)}$, under $\widetilde{\pi}$, the bottleneck receivers will be \textit{at least} at the end of batch $b-1$.
\item At ${t_{b-1}}^{(\pi)}$, under $\pi$, the bottleneck receivers will be \textit{exactly} at the end of batch $b-1$.
\end{itemize}

\noindent Let us denote with $i^{(\widetilde{\pi})}$ the $i$'s such that $r_i \in {R_{b-1}^*}^{(\widetilde{\pi})}({t_{b-1}}^{(\pi)})$ and with $i^{(\pi)}$ the $i$'s such that $r_i \in {R_{b-1}^*}^{(\pi)}({t_{b-1}}^{(\pi)})$. As in the previous lemma (lemma \ref{lemma1}) we will distinguish between two cases.

\noindent \textit{Case 1 :} ${R_{b-1}^*}^{(\widetilde{\pi})}({t_{b-1}}^{(\pi)}) \neq \emptyset$.\footnote{In this case, we need condition 2 ( ${R_{b-1}^*}^{(\widetilde{\pi})}({t_{b-1}}^{(\pi)}) \subseteq {R^*_{b-1}}^{(\pi)}({t_{b-1}}^{(\pi)})$) in order to guarantee that all of the the $i^{(\widetilde{\pi})}$'s are included in the $i^{({\pi})}$'s.}\\

In section \ref{t_b-1} we derived the optimal decisions from $t_{b-1}$ and onwards. Based on that, $\widetilde{\pi}$ will transmit a packet to every $r_{i^{(\widetilde{\pi})}}$ that it is ON. Thus, 
\begin{center}
$X_{i^{(\widetilde{\pi})}}^{\widetilde{\pi}}(t+1) \geq X_{i^{(\widetilde{\pi})}}^{\pi}(t+1) \hspace{2 em} t \geq {t_{b-1}}^{(\pi)}, \forall r_{i^{(\widetilde{\pi})}} \in {R_{b-1}^*}^{(\widetilde{\pi})}(t)$\footnote{It is evident that all the $r_i$'s that belong to ${R_{b-1}^*}^{(\pi)}(t)$ and not to ${R_{b-1}^*}^{(\widetilde{\pi})}(t)$,  will have received more packets with $\widetilde{\pi}$ than with $\pi$.}
\end{center}
Each of the $r_{i^{(\widetilde{\pi})}}$'s will move to the set $R_{b}^{(\widetilde{\pi})}$ once they receive one packet and no new receivers can enter either set, thus :
\begin{center}
${R_{b-1}^*}^{(\widetilde{\pi})}(t+1) \subseteq {R_{b-1}^*}^{(\pi)}(t+1)\footnote{Obviously, the fact that we focus on the $i^{(\widetilde{\pi})}$'s and disregard some of the $i^{(\pi)}$'s does not affect this result since ${R_{b-1}^{*}}^{(\widetilde{\pi})}({t_{b-1}}^{(\pi)}) \subseteq {R_{b-1}^*}^{(\pi)}({t_{b-1}}^{(\pi)})$} \hspace{2 em} t \geq {t_{b-1}}^{(\pi)}$
\end{center}
Therefore, as in section \ref{t_b-1}, ${t_{b-1}^{ON}}^{(\widetilde{\pi})} \leq {t_{b-1}^{ON}}^{(\pi)}$ and thus ${R_{b-1}^*}^{(\widetilde{\pi})}({t_{b-1}^{ON}}^{(\pi)}) \subseteq {R_{b-1}^*}^{(\pi)}({t_{b-1}^{ON}}^{\pi)})$. By using Lemma \ref{lemma1} we can see that $T^{\widetilde{\pi}} \leq T^{\pi}$.
\vspace{1 em}

\noindent \textit{Case 2 :} ${R_{b-1}^*}^{(\widetilde{\pi})}({t_{b-1}}^{(\pi)}) = \emptyset$.

Let us denote with $r_{l^{(\widetilde{\pi})}}$ and $r_{l^{(\pi)}}$ one of the bottleneck receiver under $\widetilde{\pi}$ and $\pi$ at ${t_{b-1}}^{(\pi)}$, respectively. Then,
\begin{center}
$X_{l^{(\widetilde{\pi})}}({t_{b-1}}^{(\pi)}) < X_{l^{(\pi)}}({t_{b-1}}^{(\pi)})$
\end{center}
Since there will be no conflict slots with $\widetilde{\pi}$ (since no receiver is in $R_{b-1}$), $r_{l^{(\widetilde{\pi})}}$ will receive a packet at every $t$ that it is ON, whereas $r_{l^{(\pi)}}$ might receive a packet at every $t$ that it is ON. Therefore, 
\begin{equation}
T_{{l^{(\widetilde{\pi})}}}^{\widetilde{\pi}} < T_{l^{(\pi)}}^{\pi}
\label{17}
\end{equation}
By using lemma \ref{lemmaA}, we can see that :
\begin{center}
$T_{l^{(\widetilde{\pi})}}^{\widetilde{\pi}} \geq \min\limits_k T_k^{\widetilde{\pi}}$ , where $r_k$ is any receiver.
\end{center}
\begin{center}

Hence, 
$T^{\widetilde{\pi}} = T_{r_{l^{(\widetilde{\pi})}}}^{\widetilde{\pi}}$.

\end{center}
\noindent Using eq. \ref{17}, we can see that : 
\begin{center}
$T^{\pi} \geq T_{l^{(\pi)}}^{\pi} > T_{{l^{(\widetilde{\pi})}}}^{\widetilde{\pi}} = T^{\widetilde{\pi}}$
\end{center}

\bibliography{references}
\bibliographystyle{IEEETran}

\end{document}